\newtheorem{theorem}{Theorem}[section]
\newtheorem{definition}[theorem]{Definition}
\newtheorem{proposition}[theorem]{Proposition}
\newtheorem{corollary}[theorem]{Corollary}
\newtheorem{lemma}[theorem]{Lemma}
\begin{document}

\title[Separability of unitarily invariant random quantum states]{On the separability of unitarily invariant random quantum states -- the unbalanced regime}
\date{\today}

\author{Ion Nechita}
\address{CNRS, Laboratoire de Physique Th\'{e}orique, IRSAMC, Universit\'{e} de Toulouse, UPS, F-31062 Toulouse, France}
\email{nechita@irsamc.ups-tlse.fr}

\subjclass[2000]{}
\keywords{}

\begin{abstract}
We study entanglement-related properties of random quantum states which are unitarily invariant, in the sense that their distribution is left unchanged by conjugation with arbitrary unitary operators. In the large matrix size limit, the distribution of these random quantum states is characterized by their limiting spectrum, a compactly supported probability distribution. We prove several results characterizing entanglement and the PPT property of random bipartite unitarily invariant quantum states in terms of the limiting spectral distribution, in the unbalanced asymptotical regime where one of the two subsystems is fixed, while the other one grows in size.  
\end{abstract}

\maketitle

\tableofcontents

\section{Introduction}

In Quantum Information Theory, when one needs to understand properties of \emph{typical} density matrices, it is necessary to endow the convex body of quantum states with a natural, physically motivated probability measure, in order to compute statistics of the relevant quantities. Since the late 1990's, there have been several candidates for such measures: the induced measures \cite{zyczkowski2001induced}, the Bures measure \cite{hall1998random}, or random matrix product states \cite{garnerone2010typicality}, just to name a few. 

The induced measures of {\.Z}yczkowski and Sommers have received the most attention, mainly due to their simplicity and to their natural physical interpretation: a density matrix from the induced ensemble is obtained by tracing an environment system of appropriate dimension out of a random uniform bipartite pure state (the latter being distributed along the Lebesgue measure on the unit sphere of the corresponding complex Hilbert space). 

In \cite{aubrun2012partial}, Aubrun studied bipartite random quantum states from the induced ensemble, and determined for which values of the ratio environment size/system size the random states are, with high probability, PPT (i.e.~they have a positive semidefinite partial transpose). Aubrun's idea was developed and generalized in many directions, for other entanglement-related properties and in different asymptotic regimes in the following years \cite{aubrun2012realigning,fukuda2013partial,banica2013asymptotic,jivulescu2014reduction,jivulescu2015thresholds,banica2015block,lancien2016k,puchala2016distinguishability}. One of the most notable results in this framework is the characterization of the \emph{entanglement threshold} from \cite{aubrun2014entanglement}, in which the authors determine, up to logarithmic factors, how large should the environment be in order for a random bipartite quantum state from the induced ensemble to be separable.

In this work, we consider random quantum states which have the property that their distribution is left unchanged by conjugation with arbitrary unitary operations; we call them \emph{unitarily invariant}. These distributions are characterized only by their spectrum, and we consider sequences of distributions with the property that their spectra converge towards some compactly supported probability measure $\mu$ on the real line. In particular, the family of distributions we consider generalizes the induced ensemble, which corresponds to a Mar{\v{c}}enko-Pastur limiting spectral distribution. We provide conditions such that the quantum state corresponding to a random unitarily invariant matrix will be, with large probability, PPT,  separable, or entangled. We shall ask that the conditions be simple, and only depend on the asymptotic spectrum of the random matrices. We state now an informal version of some of the main results contained in this paper; we refer the reader to Theorem \ref{thm:bounds-mu-Gamma} and Propositions \ref{prop:bounds-mu-Delta}, \ref{prop:bounds-X} for the exact results. 

\begin{theorem}
Let $X_d \in \mathcal M_{n}(\mathbb C) \otimes \mathcal M_{d}(\mathbb C)$ a sequence of unitarily invariant random matrices converging ``strongly'' to a compactly supported probability measure $\mu$; here, $n$ and $\mu$ are fixed. Assume that the limiting spectral measure $\mu$ has average $m$, variance $\sigma^2$, and is supported on the interval $[A,B] \subseteq [0,\infty)$. Then, 
\begin{itemize}
\item If the following condition holds, then the sequence $(X_d)_d$ is asymptotically PPT:
$$n(m-2\sigma) > B-A + 2\sigma.$$
\item If one of the two following conditions holds, then the sequence $(X_d)_d$ is asymptocally separable:
\begin{align*}
(n^2+n-1)A &> B + m(n^2-2) + 2\sigma \sqrt{n^2-2}\\
A & >(n^2-2)(B-m) + 2\sigma\sqrt{n^2-2}.
\end{align*}
\item If the following condition holds, then the sequence $(X_d)_d$ is asymptotically entangled:
$$\frac B m < 1 + n \frac{\sigma^2}{m^2} - 2 \frac{\sigma}{m}\sqrt{n-1}.$$
\end{itemize}
\end{theorem}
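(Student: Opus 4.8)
The plan is to reduce all three assertions to operator-algebraic statements about the limit of $X_d$ obtained from strong convergence, together with an explicit description of the limiting $*$-distributions of the relevant \emph{block-modified} matrices --- the partial transpose, and the reduction-type map --- which in the unbalanced regime is computable by $\mathcal M_n$-valued free probability and the Weingarten calculus.

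First I would set up the limit. Write $X_d = U_d D_d U_d^*$ with $U_d$ Haar-distributed on $\mathcal U(nd)$ and $\operatorname{spec}(D_d)\to\mu$. Strong convergence produces a self-adjoint $X$ in a tracial $C^*$-probability space $(\mathcal A,\tau)$ with $\operatorname{spec}(X)\subseteq[A,B]$, $\tau(X)=m$ and $\tau(X^2)=m^2+\sigma^2$, to which $X_d$ converges jointly in $*$-moments and in operator norm with the fixed subalgebra $\mathcal M_n\otimes I_d$. The partial transpose $X_d^\Gamma$ (on either factor --- both yield the same PPT property, since the two are related by a full transpose) is a fixed $*$-polynomial expression in $X_d$ and the matrix units of $\mathcal M_n$, so $X_d^\Gamma\to X^\Gamma$ in norm as well. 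Consequently: the sequence is asymptotically PPT iff $\lambda_{\min}(X^\Gamma)\ge 0$; it is asymptotically entangled as soon as $X^\Gamma\not\ge 0$ (NPT implies entanglement); and it is asymptotically separable iff $X/\tau(X)$ lies in the operator-norm closure of the separable cone. It remains to certify each property from $(A,B,m,\sigma,n)$ alone.

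The technical core is the limiting law of $X^\Gamma$. Writing $X=m\mathbf 1+X_0$ with $X_0$ centered, supported in $[A-m,B-m]$ and $\tau(X_0^2)=\sigma^2$, one has $X^\Gamma=m\mathbf 1+X_0^\Gamma$; decomposing $X_0$ along a Hermitian orthonormal basis of $\mathcal M_n$ split into transpose-symmetric and transpose-antisymmetric parts, $X_0^\Gamma$ is $X_0$ with the antisymmetric components sign-flipped, and the joint limiting law of its $d\times d$ blocks --- a free-compression computation for $U_dD_dU_d^*$ --- determines the distribution of $X_0^\Gamma$: it is centered with operator-valued variance $\sigma^2\mathbf 1_n$, but its free cumulants are reshuffled by the transpose. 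From this I expect to extract, via an operator Cauchy--Schwarz / Lehner-type norm estimate, $\|X_0^\Gamma\|\le 2\sigma+\tfrac1n(B-A+2\sigma)$, and, via a sharper estimate of how far $X_0^\Gamma$ can dip below zero (using $X_0\le B-m$), the quantity $2\sigma\sqrt{n-1}$ governing the NPT bound; the same scheme applied to the reduction map yields the constants $n^2-2$ and $2\sigma\sqrt{n^2-2}$.

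Granting these estimates, the bullets follow quickly. If $n(m-2\sigma)>B-A+2\sigma$ then $m>\|X_0^\Gamma\|$, so $\lambda_{\min}(X^\Gamma)>0$ and the sequence is asymptotically PPT. If $B/m<1+n\sigma^2/m^2-2\sigma\sqrt{n-1}/m$, equivalently $n\sigma^2-2m\sqrt{n-1}\,\sigma-m(B-m)>0$, the negative-reach estimate forces $\lambda_{\min}(X^\Gamma)<0$, so $X$ is NPT and the sequence is asymptotically entangled. For separability, under either displayed inequality one shows $X/\tau(X)$ lies inside an explicit separable neighborhood of the maximally mixed (product) state whose radius depends only on the fixed dimension $n$ --- using the spectral concentration encoded by $A$ close to $B$ together with the block bounds above --- the two inequalities corresponding to keeping the two relevant radii (deviation of $X$, respectively of $X^\Gamma$, from the product state, measured through the $\mathcal M_n$ factor) below the threshold. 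The main obstacle is the middle step: pinning down the limiting operator-valued law of the block-modified matrices, and especially converting it into norm estimates with the sharp constants; a secondary difficulty is selecting a separability criterion that does not degrade as $d\to\infty$ --- ordinary Gurvits--Barnum balls around the maximally mixed state shrink too fast --- and matching its threshold to the spectral data.
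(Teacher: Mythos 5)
The most serious problem is your third bullet. You propose to certify entanglement by showing the state is NPT (``the negative-reach estimate forces $\lambda_{\min}(X^\Gamma)<0$''). This cannot yield the stated condition, because that condition is compatible with the state being PPT. Concretely, take $\mu=\mathrm{SC}_{m,\sigma}$ with $(2+2\sqrt{n-1})/n<\sigma/m<1/2$ (possible once $n$ is large enough): then $\mu$ is supported in $[m-2\sigma,m+2\sigma]\subset(0,\infty)$, the displayed inequality holds with $B=m+2\sigma$, and yet $\mu^\Gamma=\mu$ (Lemma \ref{lem:mu-equals-mu-gamma} and the GUE discussion), so $X_d^\Gamma\geq 0$ asymptotically. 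Detecting \emph{PPT entangled} states is precisely the point of this bullet. The paper's route (Theorem \ref{thm:witness-X}) is the correlated witness $W_d=\beta I_{nd}-X_d$: its asymptotic block-positivity is governed by $\|X_d\|_{S(1)}\to\frac1n\operatorname{maxsupp}(\mu^{\boxplus n})$ (Proposition \ref{prop:Sk-norm-ui}, built on the asymptotic $k$-positivity results for strongly convergent sequences), while $\frac{1}{nd}\langle W_d,X_d\rangle\to\beta m_1(\mu)-m_2(\mu)$; a $\beta$ strictly between $\frac1n\operatorname{maxsupp}(\mu^{\boxplus n})$ and $m_2(\mu)/m_1(\mu)$ exists exactly under \eqref{eq:witness-X}, and the stated inequality in $(A,B,m,\sigma)$ then follows from the upper endpoint in Proposition \ref{prop:support-mu-T}. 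Nothing in your sketch produces a witness capable of going beyond the PPT criterion.

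The other two bullets have gaps of a different nature. For separability you never identify a workable criterion: you rightly discard Gurvits--Barnum, but the ``explicit separable neighborhood'' carrying the constants $n^2-2$ and $2\sigma\sqrt{n^2-2}$ is exactly what must be constructed, and the reduction map you invoke is a \emph{necessary} condition for separability, so it points the wrong way. The paper gets the neighborhood from the depolarizing channel: $\Delta_t$ is entanglement breaking for $t\in[-1/(n^2-1),1/(n+1)]$, hence solving $X=(\Delta_t\otimes\operatorname{id})(Y)$ with $Y\geq 0$ at the two endpoint values of $t$ yields the sufficient conditions \eqref{eq:cond-suff-sep-+} and \eqref{eq:cond-suff-sep--}, whose asymptotic validity is read off from the block-modification Theorem \ref{thm:strong-cv-block-modified} together with Proposition \ref{prop:support-mu-T}. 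For the PPT bullet your skeleton (strong convergence of $X_d^\Gamma$ plus a norm bound on the centered part) is consistent with the paper, and the bound $\|X_0^\Gamma\|\leq 2\sigma+\frac1n(B-A+2\sigma)$ you posit is indeed equivalent to the claimed condition; but it is asserted (``I expect to extract''), not proved. The paper proves it by rewriting $D_n\mu^\Gamma=\left(\mu^{\boxplus 1+\varepsilon}\boxplus D_{-1}\mu\right)^{\boxplus n(n-1)/2}$ with $\varepsilon=2/(n-1)$ and applying twice the support estimate $\operatorname{supp}(\mu^{\boxplus T})\subseteq[A+m(T-1)-2\sigma\sqrt{T-1},\,B+m(T-1)+2\sigma\sqrt{T-1}]$ of Proposition \ref{prop:support-mu-T}. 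That quantitative control of free convolution powers is the engine behind all three bullets and is absent from your argument.
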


The paper is organized as follows: Sections \ref{sec:unitarily-invariant}, \ref{sec:free-prob}, \ref{sec:entanglement} contain facts from the theories of, respectively, unitarily invariant random matrices, free probability, and entanglement, which are used later in the paper. Section \ref{sec:block-modified-strong} contains a strengthening of a result about block-modifications of random matrices which allows us to study the behavior of the extremal eigenvalues of such matrices. Sections \ref{sec:partial-transposition}, \ref{sec:sufficient}, \ref{sec:necessary} contain the new results of this work, spectral conditions that unitarily invariant random matrices must satisfy in order to, respectively, have the PPT property, to be separable, or to be entangled. Moreover, Section \ref{sec:necessary} contains results about the asymptotic value the $S(k)$ norms introduced by Johnston and Kribs take on unitarily invariant random matrices. Finally, in Section  \ref{sec:SN-PPT}, we show that shifted GUE matrices are PPT and have Schmidt number that scales linearly with the dimension of the fixed subsystem in the unbalanced asymptotical regime.

\medskip

\noindent {\it Acknowledgments.} This research has been supported by the ANR projects {StoQ} {ANR-14-CE25-0003-01} and {NEXT} {ANR-10-LABX-0037-NEXT}, and by the PHC Sakura program (project number: 38615VA). The author acknowledges the hospitality of the TU M\"unchen, where part of this work was conducted.

\section{Unitarily invariant random matrices and strong convergence}\label{sec:unitarily-invariant}

In this work, we shall be concerned with \emph{unitarily invariant} random matrices: these are self-adjoint random matrices $X \in \mathcal M_d^{sa}(\mathbb C)$ having the property that for any unitary matrix $U \in \mathcal U_d$, the random variables $X$ and $UXU^*$ have the same distribution. From the invariance of the Haar distribution on $\mathcal U_d$, it follows that, given a deterministic matrix $A \in \mathcal M_d^{sa}(\mathbb C)$ and a Haar-distributed random unitary matrix $U \in \mathcal U_d$, the distribution of the random matrix $X:=UAU^*$ is unitarily invariant; this is the most common construction of unitarily invariant ensembles. 

The most well-studied ensembles of random matrices are, without a doubt, \emph{Wigner ensembles}  \cite{wigner1955characteristic}: these are random matrices $X \in \mathcal M_d^{sa}(\mathbb C)$ having independent and identicallly distributed (i.i.d.) entries, up to the symmetry condition $X_{ji} = \bar{X}_{ij}$, see \cite[Section 2]{anderson2010introduction}. At the intersection of Wigner and unitarily invariant ensembles is the \emph{Gaussian unitary ensemble} (GUE). A random matrix $X \in \mathcal M_d^{sa}(\mathbb C)$ is said to have $\mathrm{GUE_d}$ distribution if its entries are as follows:
$$X_{jk} = \begin{cases}
\frac{1}{\sqrt d} A_{jk}&\qquad \text{ if } j=k\\
\frac{1}{\sqrt{2d}} (A_{jk} + i B_{jk})&\qquad \text{ if } j<k\\
\frac{1}{\sqrt{2d}} (A_{kj} - i B_{kj})&\qquad \text{ if } j>k\\
\end{cases}$$
where $A_{jk},B_{jk}$ are i.i.d. real, centered standard Gaussian random variables. 

The celebrated Wigner theorem states that GUE random matrices converge in moments, as $d \to \infty$ towards the semicircle law. 

\begin{theorem}\label{thm:Wigner}
Let $X_d$ be a sequence of GUE random matrices. Then, for all moment orders $p \geq 1$, we have
$$\lim_{d \to \infty} \mathbb E \frac 1 d \operatorname{Tr} X_d^p = \int x^p d\mathrm{SC}_{0,1}(x) = \begin{cases}
\mathrm{Cat}_{p/2}&\qquad \text{ if $p$ is even},\\
0&\qquad \text{ if $p$ is odd},
\end{cases}$$
where $\mathrm{Cat}_p$ are the Catalan numbers and $\mathrm{SC}_{a,\sigma}$ is the semicircular distribution with mean $a$ and variance $\sigma^2$:
$$\mathrm{SC}_{a,\sigma} = \frac{\sqrt{4\sigma^2-(x-a)^2}}{2\pi \sigma^2} \mathbf{1}_{[a-2\sigma ,a+ 2\sigma]}(x) dx.$$
\end{theorem}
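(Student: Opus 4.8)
The plan is to prove this by the method of moments (the ``genus expansion''), which in the GUE case reduces everything to Gaussian combinatorics. First I would expand the normalized trace as a sum over closed index walks,
$$\mathbb E\,\frac 1 d \operatorname{Tr} X_d^p = \frac 1 d \sum_{i_1,\dots,i_p=1}^d \mathbb E\big[(X_d)_{i_1 i_2}(X_d)_{i_2 i_3}\cdots (X_d)_{i_p i_1}\big],$$
with indices read cyclically. Since the entries of a GUE matrix form a centered complex Gaussian family, Wick's (Isserlis') formula rewrites each expectation as a sum over pair partitions $\pi$ of $\{1,\dots,p\}$ of the products of pairwise covariances. When $p$ is odd there are no pair partitions, so the expectation vanishes identically; this already disposes of the odd case.

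For even $p$, the relevant elementary covariance under the normalization in the statement is $\mathbb E[(X_d)_{jk}(X_d)_{lm}] = \tfrac 1 d \delta_{jm}\delta_{kl}$. Hence each block $\{a,b\}$ of a pairing $\pi$ contributes a factor $d^{-1}$ and imposes the identifications $i_a = i_{b+1}$ and $i_{a+1} = i_b$ on the cyclically labelled indices. Carrying out the free summation over the indices subject to these constraints, I obtain
$$\mathbb E\,\frac 1 d \operatorname{Tr} X_d^p = \sum_{\pi \in \mathcal P_2(p)} d^{\,c(\pi)-1-p/2},$$
where $\mathcal P_2(p)$ is the set of pair partitions and $c(\pi)$ is the number of equivalence classes the constraints of $\pi$ induce on $\{i_1,\dots,i_p\}$.

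The heart of the argument is the topological estimate $c(\pi) \le p/2 + 1$, with equality if and only if $\pi$ is non-crossing. I would prove this by interpreting $\pi$ as a gluing in pairs of the (oriented) edges of a $p$-gon: the result is an orientable surface carrying a map with one face, $p/2$ edges, and $c(\pi)$ vertices, so Euler's formula gives $c(\pi) - p/2 + 1 = 2 - 2g \le 2$, with equality precisely when the genus $g$ is zero, which is classically equivalent to $\pi$ being non-crossing. Consequently every crossing pairing contributes $O(d^{-2})$ and drops out in the limit, while each non-crossing pairing contributes exactly $1$; since there are $\mathrm{Cat}_{p/2}$ non-crossing pair partitions of a $p$-element set, the limit equals $\mathrm{Cat}_{p/2}$. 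It remains to identify $\mathrm{Cat}_{p/2}$ with $\int x^p\, d\mathrm{SC}_{0,1}(x)$, which I would do either by the substitution $x = 2\sin\theta$ turning the integral into a Wallis-type computation, or by checking that both sides satisfy the Catalan recursion.

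The main obstacle is the genus bound $c(\pi)\le p/2+1$ together with the characterization of its equality case; the rest is bookkeeping. An alternative route that sidesteps the surface language is to establish the Catalan recursion directly: conditioning on the partner of the element $1$ in $\pi$ splits the remaining walk into two ``nested'' cyclic sub-walks that decouple, yielding $m_p = \sum_{k\ge 0} m_{2k}\,m_{p-2-2k}$ for $m_p := \lim_d \mathbb E \tfrac 1 d \operatorname{Tr} X_d^p$, which is exactly the Catalan recursion. Either way one still needs the power-of-$d$ counting above to discard the crossing contributions. Note that since the statement only concerns the expected moments, no concentration or almost-sure argument is required here.
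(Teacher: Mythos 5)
The paper does not prove this statement: it is quoted as the classical Wigner theorem, with the proof deferred to the cited references (Wigner's original paper and the monograph of Anderson, Guionnet and Zeitouni). Your argument is the standard and correct moment-method proof of that result: the Wick expansion with covariance $\mathbb E[(X_d)_{jk}(X_d)_{lm}]=\tfrac1d\delta_{jm}\delta_{kl}$, the exponent count $d^{\,c(\pi)-1-p/2}$, the genus bound $c(\pi)\le p/2+1$ with equality exactly for non-crossing pairings, and the Wallis-integral identification of $\mathrm{Cat}_{p/2}$ with the even semicircle moments are all accurate, so nothing further is needed.
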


Note that the above theorem only gives partial information about the behavior of the extremal eigenvalues (or about the operator norm) of $X_d$. For example, convergence in distribution implies that the larges eigenvalue of $X_d$ is at least 2 (which is the maximum of the support of the limit distribution $\mathrm{SC}_{0,1}$). The fact that the largest eigenvalue of $X_d$ converges indeed to 2 requires much more work, see \cite{bai1988necessary} for the case of Wigner matrices. In their seminal paper \cite{haagerup2005new}, Haagerup and Thorbj{\o}rnsen have further generalized these results to polynomials in tuples of GUE matrices and called this phenomenon \emph{strong convergence}.

\begin{definition}\label{def:strong-convergence}
A sequence of $k$-tuples of GUE distributed random matrices $(X^{(1)}_d, X_d^{(2)}, \ldots, X_d^{(k)}) \in \mathcal M_d^{sa}(\mathbb C)^k$ is said to \emph{converge strongly} towards a $k$-tuple of non-commutative random variables $(x_1, x_2, \ldots, x_k)$ living in some $C^*$- non-commutative probability space $(\mathcal A,\tau)$, if they converge in distribution: for all polynomials $P$ in $2k$ non-commutative variables,
$$\lim_{d \to \infty} \mathbb E \frac 1 d \operatorname{Tr} P(X^{(1)}_d, X^{(1)*}_d, \ldots,X^{(k)}_d, X^{(k)*}_d) = \tau [P(x_1, x_1^*, \ldots,x_k, x_k^*)]$$
and, moreover, for all $P$ as above, we also have the convergence of the operator norms:
$$\text{ almost surely, }\qquad \lim_{d \to \infty} \|P(X^{(1)}_d, X^{(1)*}_d, \ldots, X^{(k)}_d, X^{(k)*}_d)\| = \|P(x_1, x_1^*, \ldots, x_k, x_k^*)\|.$$
\end{definition}

Collins and Male generalized in \cite{collins2014strong} the result above to arbitrary unitarily invariant random matrices, by dropping the GUE hypothesis and asking that individual matrices $X_d^{(j)}$ converge strongly to their respective limits $x_j$, see \cite[Theorem 1.4]{collins2014strong}. Their result will be crucial to the present paper, since it will allow us to prove that the extremal eigenvalues have indeed the behavior suggested by the convergence in distribution (i.e.~they converge to the extrema of the support of the limiting eigenvalue distribution, in the single matrix case $k=1$).

\section{Some elements of free probability}\label{sec:free-prob}

We recall in this section the main tools from free probability theory needed here. The excellent monographs \cite{voiculescu1992free,nica2006lectures,mingo2017free} contain detailed presentations of the theory, with emphasis on different aspects. 

In free probability theory, non-commutative random variables are seen as abstract elements of some $C^*$-algebra $\mathcal A$, equipped with a trace $\tau$ which plays the role of the expectation in classical probability. The notion of distribution of a family of random variables $(x_1, \ldots, x_k)$ is the set of all evaluations $(P(x_1, x_1^*, \ldots, x_k,x_k^*))_P$, where $P$ runs through all polynomials in $2k$ non-commutative variables (see also Definition \ref{def:strong-convergence}). In the case of a single self-adjoint variable $x=x^*$, the distribution is given by the sequence of moments 
$$m_p(x) := \tau(x^p), \qquad p \geq 1.$$
The notion of \emph{free cumulants} introduced by Speicher in \cite{speicher1994multiplicative} plays a central role in the theory, in the sense that it characterizes free independence. In the case of a single variable, one can express the moments in terms of the free cumulants by the \emph{moment-cumulant formula}
$$m_p(x) = \sum_{\sigma \in \mathrm{NC}_p} \kappa_\sigma(x),$$
where the free cumulant functional $\kappa$ is defined multiplicatively on the cycles of the non-crossing partition $\sigma$:
$$\kappa_\sigma = \prod_{c \text{ cycle of }\sigma} \kappa_{|c|}.$$

Let us briefly discuss two examples. First, it is easy to see that the that the semicircular distribution introduced in Theorem \ref{thm:Wigner} has free cumulants $\kappa_1(\mathrm{SC_{a,\sigma}}) = a$, $\kappa_2(\mathrm{SC_{a,\sigma}}) = \sigma^2$, while $\kappa_p(\mathrm{SC_{a,\sigma}}) = 0$, for all $p \geq 3$. The vanishing of free cumulants of order 3 and larger characterizes the distribution which appears  in the free central limit theorem (exactly as in the classical situation, see \cite[Lecture 8]{nica2006lectures}).

Another remarkable family of distributions in free probability theory are the Mar{\v{c}}enko-Pastur distributions $\mathrm{MP}_c$, where $c>0$ is a positive scalar. The distribution $\mathrm{MP}_c$ is defined by the very simple property that all its free cumulants are equal to $c$: $\kappa_p(\mathrm{MP}_c) = c$, $\forall p \geq 1$. Using the moment-cumulant formula and Stieltjes inversion, one can compute the density of $\mathrm{MP}_c$:
\begin{equation}\label{eq:Marchenko-Pastur}
\mathrm{MP}_c=\max (1-c,0)\delta_0+\frac{\sqrt{(b-x)(x-a)}}{2\pi x} \; \mathbf{1}_{(a,b)}(x) \, \mathrm dx,
\end{equation}
where $a = (1-\sqrt c)^2$ and $b=(1+\sqrt c)^2$.
\begin{figure}[htbp]
\begin{center}
\includegraphics{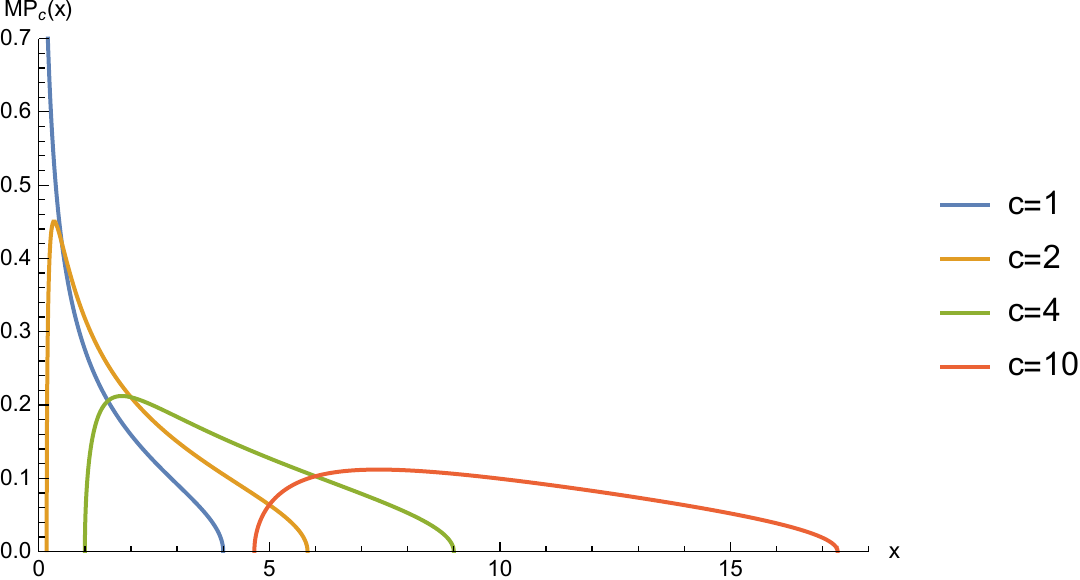}
\caption{The density of the Mar{\v{c}}enko-Pastur distributions $\mathrm{MP}_c$ for different values of the parameter $c$.}
\label{fig:Marchenko-Pastur}
\end{center}
\end{figure}

With the help of free cumulants, we can introduce the \emph{free additive convolution} of compactly supported probability measures, a notion which will play a key role in what follows. Given two compactly supported probability measures $\mu, \nu$, define $\mu \boxplus \nu$, the free additive convolution of $\mu$ and $\nu$, as the unique probability measure having free cumulants
$$\kappa_p(\mu \boxplus \nu) = \kappa_p(\mu) + \kappa_p(\nu) \qquad \forall p \geq 1.$$
For a given measure $\mu$, one can defined iteratively its free additive convolution powers as 
$$\mu^{\boxplus n}:= \underbrace{\mu \boxplus \cdots \boxplus \mu }_{n \text{ times}},$$
for any integer $n \geq 1$. As it was shown by Nica and Speicher in \cite{nica1996multiplication}, this semi-group extends from positive integers to all real numbers $T \geq 1$. This semi-group plays an important role in what follows, mainly due to the connection to block-modifications of random matrices (see Section \ref{sec:block-modified-strong}); for now, it is important to remember that the measures $\mu^{\boxplus T}$ are characterized by their free cumulants
$$\kappa_p(\mu^{\boxplus T}) = T \kappa_p(\mu) \qquad \forall p \geq 1, \, \forall T \in [1, \infty).$$

It is in general very hard to get a grip on the support of the elements of the free additive convolution semi-group $\mu^{\boxplus T}$. Although there exist implicit algebraic characterizations of the support of the measures $\mu^{\boxplus T}$ in terms of the support of $\mu$ and $T$, it is only in very simple circumstances that one can write down explicit formulas for the support. We recall below an approximation result obtained in \cite[Lemma 2.3 and Theorem 2.4]{collins2015estimates}. 

\begin{proposition}\label{prop:support-mu-T}
Let $\mu$ be a probability measure having mean $m$ and variance $\sigma^2$, whose support is contained in the compact interval $[A,B]$. Then, for any $T \geq 1$, we have
$$\operatorname{supp}(\mu^{\boxplus T}) \subseteq [A+m(T-1) - 2\sigma\sqrt{T-1}, B+m(T-1) + 2\sigma\sqrt{T-1}].$$
\end{proposition}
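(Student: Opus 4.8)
\emph{Plan of the proof.} The idea is to read the endpoints of $\operatorname{supp}(\mu^{\boxplus T})$ off the compositional inverse of the Cauchy transform of $\mu^{\boxplus T}$, exploiting the fact that the $R$-transform scales linearly along the free convolution semigroup.

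First I would reduce to a centred measure. Since free convolution commutes with translations, writing $\mu=\delta_m\boxplus\nu$ with $\nu$ centred, of variance $\sigma^2$ and supported in $[A-m,B-m]$, one gets $\mu^{\boxplus T}=\delta_{mT}\boxplus\nu^{\boxplus T}$; as $mT+(A-m)=A+m(T-1)$, the statement is equivalent to $\operatorname{supp}(\nu^{\boxplus T})\subseteq[(A-m)-2\sigma\sqrt{T-1},\,(B-m)+2\sigma\sqrt{T-1}]$. So I may assume $m=0$, hence $A\le 0\le B$, and I must prove $\operatorname{supp}(\mu^{\boxplus T})\subseteq[A-2\sigma\sqrt{T-1},\,B+2\sigma\sqrt{T-1}]$; the case $T=1$ is trivial, so take $T>1$. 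Then I would use the standard edge description: for a compactly supported $\rho$, the compositional inverse $K_\rho$ of $G_\rho$ obeys $K_\rho(w)=\mathcal R_\rho(w)+1/w$ near $0$, with $\mathcal R_\rho(w)=\sum_{n\ge 1}\kappa_n(\rho)\,w^{n-1}$ the $R$-transform; the map $w\mapsto\mathcal R_\rho(w)+1/w$ is decreasing on an interval $(0,w^\star]$ ending at its first positive critical point $w^\star$, and the right endpoint of $\operatorname{supp}(\rho)$ equals $K_\rho(w^\star)=\min_{0<w\le w^\star}\bigl(\mathcal R_\rho(w)+1/w\bigr)$, with a symmetric statement on the negative axis for the left endpoint. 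Since $\mathcal R_{\mu^{\boxplus T}}=T\,\mathcal R_\mu$, it then suffices to exhibit a single admissible $w_0>0$ with $T\,\mathcal R_\mu(w_0)+1/w_0\le B+2\sigma\sqrt{T-1}$, and likewise one $w_0<0$ on the left.

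The heart of the matter is a quantitative estimate of $\mathcal R_\mu$ near $0$, which is where the two data $m=0$ and $\sigma^2$ come in. On one side, $\mathcal R_\mu(w)=\sigma^2 w+\sum_{n\ge 3}\kappa_n(\mu)\,w^{n-1}$, and since $\operatorname{supp}(\mu)$ lies in an interval of length $B-A$ the higher free cumulants grow at most geometrically, so the deviation of $\mathcal R_\mu(w)$ from $\sigma^2 w$ is controlled for small $w$ by a factor comparable to $(1-c(B-A)w)^{-1}$. On the other side, from $\mathcal R_\mu(w)=K_\mu(w)-1/w$ and $K_\mu(w)>B$ on $(0,G_\mu(B^+))$ one gets $\mathcal R_\mu(w)>B-1/w$. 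Writing $T\,\mathcal R_\mu(w)+1/w=\bigl(K_\mu(w)-B\bigr)+(T-1)\,\mathcal R_\mu(w)+B$ and choosing $w_0$ of order $1/(\sigma\sqrt{T-1})$, the two genuinely $T$-dependent contributions $(T-1)\,\mathcal R_\mu(w_0)\approx(T-1)\sigma^2 w_0$ and $1/w_0$ are balanced by the arithmetic--geometric mean inequality and together yield exactly $2\sqrt{(T-1)\sigma^2}=2\sigma\sqrt{T-1}$; the leftover $K_\mu(w_0)-B\ge 0$ is of lower order and is absorbed, while admissibility $w_0\le w^\star$ is verified from $\mathcal R_\mu'(w_0)\approx\sigma^2<1/w_0^2$. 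The left endpoint is handled in the same way with $w<0$.

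The step I expect to be the real obstacle is making this optimisation work \emph{uniformly in} $T$. For $T$ near $1$ the critical point $w^\star$ is not small, the Taylor expansion of $\mathcal R_\mu$ at $0$ is useless, and one must instead track $w^\star$ itself, using that it decreases from $G_\mu(B^+)$ (where the edge of $\mu$ sits) as $T$ grows and stays below it; for large $T$ the subtlety is recovering the sharp constant $2\sigma$ and, crucially, the exponent $\tfrac12$ on $T-1$ rather than on $T$, i.e.\ not wasting the $\sqrt{T}-\sqrt{T-1}$ gap. As a cross-check on the constant, the bound can also be approached via the Nica--Speicher compression model, realising $\mu^{\boxplus T}$ as the law of $T\,pap$ in the compressed space $(p\mathcal A p,\,T\tau)$ for a projection $p$ of trace $1/T$ free from $a\sim\mu$ and estimating the extreme eigenvalues of the compression directly; in the model two-point case $\mu=(1-\varepsilon)\delta_A+\varepsilon\delta_B$ this reduces to the exact spectrum of a product of two free projections, whose cross term $2\sqrt{\lambda(1-\varepsilon)\varepsilon(1-\lambda)}$ produces precisely $2\sigma\sqrt{T-1}$ after renormalisation.
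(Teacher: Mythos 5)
First, a point of reference: the paper does not prove this proposition at all --- it imports it verbatim from \cite[Lemma 2.3 and Theorem 2.4]{collins2015estimates} --- so your attempt is competing with that reference rather than with an in-paper argument. Your overall strategy (centre the measure, bound the right edge of $\mu^{\boxplus T}$ by $T\mathcal R_\mu(w_0)+1/w_0$ for one well-chosen $w_0>0$, balance two terms) is indeed the right one and is essentially the strategy of the reference. But the quantitative core of your sketch has a genuine gap, which you half-acknowledge. The bound you propose for $\mathcal R_\mu$, a Taylor/geometric-cumulant estimate $\mathcal R_\mu(w)\approx\sigma^2w$ valid only for $w$ small compared with $1/(B-A)$, is incompatible with the choice $w_0\asymp 1/(\sigma\sqrt{T-1})$: since $\sigma\le (B-A)/2$ always, for all $T$ in a bounded range the point $w_0$ lies outside any disk on which the cumulant series is controlled, and even for large $T$ the error term spoils the constant $2$. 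The missing idea is to parametrize by $z=K_\mu(w)$ rather than by $w$ and to prove the uniform estimate (centred case, $z>B$)
$\mathcal R_\mu(G_\mu(z))=z-1/G_\mu(z)\le \sigma^2/(z-B)$,
which is a one-line moment computation: $zG_\mu(z)-1=\tfrac1z\int \tfrac{x^2}{z-x}\,d\mu(x)$ lies in $[0,\tfrac{\sigma^2}{z(z-B)}]$, hence $z-1/G_\mu(z)=z\,\tfrac{zG_\mu(z)-1}{zG_\mu(z)}\le \tfrac{\sigma^2}{z-B}$. Then $T\mathcal R_\mu(w)+1/w=z+(T-1)\mathcal R_\mu(G_\mu(z))\le z+(T-1)\sigma^2/(z-B)$, and $z=B+\sigma\sqrt{T-1}$ gives exactly $B+2\sigma\sqrt{T-1}$, uniformly in $T\ge1$. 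This is the step your sketch cannot supply.

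Two further problems. Your AM--GM bookkeeping is inconsistent with your own decomposition $B+(K_\mu(w)-B)+(T-1)\mathcal R_\mu(w)$: the terms to balance are $K_\mu(w)-B=z-B$ and $(T-1)\mathcal R_\mu(w)$, not $1/w$ against $(T-1)\sigma^2w$; and $K_\mu(w_0)-B$ is not ``of lower order'' --- at the optimum it equals $\sigma\sqrt{T-1}$, i.e.\ exactly half of the answer. (Balancing your pair instead yields a bound with no $B$ in it, which cannot be correct.) Second, the admissibility of $w_0$ --- i.e.\ that $T\mathcal R_\mu(w_0)+1/w_0$ genuinely dominates $\operatorname{maxsupp}(\mu^{\boxplus T})$ --- is not settled by checking $\mathcal R_\mu'(w_0)<1/w_0^2$; you need the variational lemma asserting that $\operatorname{maxsupp}(\mu^{\boxplus T})\le T\mathcal R_\mu(G_\mu(z))+1/G_\mu(z)$ for every $z>\operatorname{maxsupp}(\mu)$, which is precisely the content of the cited Lemma 2.3 and requires a real (if standard) argument via subordination or the critical-point description of the edge. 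Your compression-model cross-check is a nice sanity test of the constant but is not developed into a proof either. With the moment estimate above and the variational lemma properly invoked, your outline closes and coincides with the proof in the literature; as written, it does not.
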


\section{The separability problem}\label{sec:entanglement}

We review in this section the notions of \emph{separability} and \emph{entanglement} from quantum information theory, as well as several important known results from this field. An excellent review of these notions is \cite{horodecki2009quantum}; for connections with random matrix theory, see \cite{collins2016random}.

We denote by $\mathcal M_d^+(\mathbb C)$ the cone of $d \times d$ complex positive semidefinite matrices. The \emph{separable cone} is a sub-cone of the set of bipartite  positive semidefinite matrices of size $d_1 \cdot d_2$ defined by
$$\mathrm{SEP}_{d_1,d_2} := \left \{\sum_{i=1}^k A_i \otimes B_i \, : \, A_i \in \mathcal M_{d_1}^+(\mathbb C), \, B_i \in \mathcal M_{d_2}^+(\mathbb C) \right \}.$$

Quantum states (resp.~separable quantum states) are elements of $\mathcal M_d^+(\mathbb C)$ (resp. $\mathrm{SEP}_{d_1,d_2}$) with unit trace; however, it is clear from the definition of separability that the trace normalization is of little importance, so we shall work with the conic versions of these notions to avoid technicalities.

Deciding whether a given positive semidefinite matrix $X \in \mathcal M_{d_1d_2}^+(\mathbb C)$ is separable is a NP-hard problem \cite{gurvits2003classical}, when formulated as a weak membership decision problem. A simple solution exists only in small dimensions $d_1d_2 \leq 6$, a fact due to the simple structure of the cone of positive maps $f:\mathcal M_{d_1}(\mathbb C) \to \mathcal M_{d_2}(\mathbb C)$. Indeed, any such positive map can be decomposed as (see \cite{woronowicz1976positive})
$$f = g + h \circ \top,$$
where $g,h$ are \emph{completely positive} maps and $\top$ is the transposition operator. Maps which can be written as above are called \emph{decomposable}; Woronowicz's result from \cite{woronowicz1976positive} shows that in the case $d_1d_2 \leq 6$, any positive map is decomposable; this fact is no longer true in larger dimensions, see \cite{horodecki1996separability}. 

The cone of separable matrices and the cone of positive maps 
$$\mathrm{POS}_{d_1, d_2} := \{f : \mathcal M_{d_1}(\mathbb C) \to \mathcal M_{d_2}(\mathbb C) \, : \,   A \geq 0 \implies f(A) \geq 0 \},$$ 
are dual to each other \cite{horodecki1996separability}:
$$X \in \mathrm{SEP}_{d_1, d_2} \iff \forall f \in \mathrm{POS}_{d_1, d_2} \quad  (f \otimes \mathrm{id}_{d_2})(X) \geq 0.$$

As we have already seen, the transposition map plays a special role in the theory. We introduce thus the cone $\mathrm{PPT}$ of matrices having a positive partial transpose
$$\mathrm{PPT}_{d_1,d_2} := \left \{X \in \mathcal M_{d_1d_2}^+(\mathbb C)\, : \, (\top_{d_1} \otimes \mathrm{id}_{d_2})(X) \geq 0 \right \}.$$
It is an intermediate cone, sitting between the separable cone and the positive semidefinite cone
$$\mathrm{SEP}_{d_1, d_2} \subseteq \mathrm{PPT}_{d_1, d_2} \subseteq \mathcal M_{d_1d_2}^+(\mathbb C).$$

\section{Strong convergence for block-modified random matrices}
\label{sec:block-modified-strong}

In this section we recall a result about the limiting distribution of random matrices obtained by acting with a given linear map on each block of a unitarily invariant random matrix \cite{arizmendi2016asymptotic}. We then upgrade this result to take into account strong convergence; the result will be used many times in the subsequent sections. 

The setting for block-modified random matrices is as follows. Consider a sequence of bipartite random matrices $X_d \in \mathcal M_{nd}^{sa}(\mathbb C)$ converging strongly as $d \to \infty$ to a compactly supported probability measure $\mu$ ($n$ being a fixed parameter). Given a (fixed) function $\varphi:\mathcal M_n(\mathbb C) \to \mathcal M_n(\mathbb C)$ preserving self-adjoint elements, define the \emph{modified random matrix} 
$$X_d^\varphi := (\varphi \otimes \mathrm{id}_d)(X_d) \in \mathcal M_{nd}^{sa}(\mathbb C),$$
obtained by acting with $\varphi$ on the $n \times n$ blocks of $X_d$. Note that in \cite{arizmendi2016asymptotic} the more general situation where $\varphi$ could change the size of blocks is considered, but this more general setting is not needed here. We also require that the function $\varphi$ satisfies the following technical condition (again, weaker conditions were considered in \cite{arizmendi2016asymptotic}; the situation here is closer to the results in \cite{banica2015block}), see \cite[Definition 4.7]{arizmendi2016asymptotic}.

\begin{definition}\label{def:UC}
Define the \emph{Choi matrix} of the linear map $\varphi$
$$\mathcal M_{n^2}^{sa}(\mathbb C) \ni C_\varphi := \sum_{i,j=1}^n \varphi(E_{ij}) \otimes E_{ij}.$$
The map $\varphi$ is said so satisfy the \emph{unitarity condition} if every eigenprojector $P$ of $C_\varphi$ satisfies
$$(\operatorname{id} \otimes \operatorname{Tr})(P) \sim I_n.$$
\end{definition}

Under this assumption on $\varphi$, we have the following result, which upgrades \cite[Theorem 5.1]{arizmendi2016asymptotic} to strong convergence. 

\begin{theorem}\label{thm:strong-cv-block-modified}
Consider a sequence of bipartite unitarily invariant random matrices $X_d \in \mathcal M_{nd}^{sa}(\mathbb C)$ converging strongly to a compactly supported probability measure $\mu$. Let $\varphi:\mathcal M_n(\mathbb C) \to \mathcal M_n(\mathbb C)$ be a hermiticity-preserving linear map satisfying the unitarity condition from Definition \ref{def:UC}. Then, the sequence of block-modified random matrices  $X_d^\varphi = (\varphi \otimes \mathrm{id}_d)(X_d)$ converges strongly to the probability measure 
\begin{equation}\label{eq:mu-varphi}
\mu^\varphi = \boxplus_{i=1}^s (D_{\lambda_i/n} \mu)^{\boxplus r_i},
\end{equation} 
where $\lambda_i$, resp~$r_i$, are the eigenvalues of the Choi matrix $C_\varphi$ and, respectively, their multiplicities. 
\end{theorem}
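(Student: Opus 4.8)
The plan is to deduce Theorem \ref{thm:strong-cv-block-modified} from two ingredients: the convergence-in-moments statement \cite[Theorem 5.1]{arizmendi2016asymptotic}, which already identifies the limiting spectral distribution of $X_d^\varphi$ as $\mu^\varphi$ given by \eqref{eq:mu-varphi}; and the strong-convergence machinery of Collins and Male \cite[Theorem 1.4]{collins2014strong}, which is what we must invoke to upgrade convergence in distribution to convergence of operator norms. The guiding idea is that $X_d^\varphi$ is a non-commutative polynomial in $X_d$ and the fixed matrix units $E_{ij} \otimes I_d$, so once we realize these latter as (deterministic) matrices converging strongly to their obvious limits in a suitable $C^*$-probability space, the theorem of \cite{collins2014strong} takes care of everything — provided we can recognize the resulting limit as $\mu^\varphi$.

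First I would set up the algebraic model. Write $\varphi(E_{ij}) = \sum_{k,l} c^{kl}_{ij} E_{kl}$; then $X_d^\varphi = \sum_{i,j,k,l} c^{kl}_{ij}\, (E_{kl} \otimes I_d)\, X_d\, (E_{ji}^{\phantom{*}} \otimes I_d)$ — wait, more carefully, $(\varphi \otimes \mathrm{id}_d)$ acting blockwise means $X_d^\varphi = \sum_{i,j} \varphi(E_{ij}) \otimes X_d^{(ij)}$ where $X_d^{(ij)} = (E_{ji} \otimes I_d) \bullet$ extracts the $(i,j)$ block; the point is simply that $X_d^\varphi = \Phi(X_d, \{E_{ij} \otimes I_d\})$ for a fixed self-adjointness-preserving non-commutative polynomial $\Phi$ with coefficients read off from $C_\varphi$. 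Since $X_d$ converges strongly to $\mu$ by hypothesis, and the deterministic matrices $E_{ij} \otimes I_d$ converge strongly (trivially, being a fixed finite-dimensional algebra amplified by identity) to the canonical generators $e_{ij} \otimes 1$ of $\mathcal M_n(\mathbb C) \otimes (\mathcal A, \tau)$, an application of \cite[Theorem 1.4]{collins2014strong} — which allows mixing genuinely random unitarily invariant matrices with fixed ones — yields that $X_d^\varphi = \Phi(X_d, \{E_{ij}\otimes I_d\})$ converges strongly to $\Phi(x, \{e_{ij}\otimes 1\})$, where $x$ has distribution $\mu$ and is free from $\mathcal M_n(\mathbb C)$ in an appropriate sense (this freeness being exactly what unitary invariance becomes in the limit, cf.\ asymptotic freeness of unitarily invariant ensembles from constant matrices). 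In particular the operator norms converge almost surely.

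It then remains to identify the distribution (in the sense of $\frac1n \operatorname{tr}_n \otimes \tau$) of the limiting operator $\Phi(x, \{e_{ij}\otimes 1\})$ with $\mu^\varphi$. But this identification is precisely the content of \cite[Theorem 5.1]{arizmendi2016asymptotic}: the moments of $X_d^\varphi$ converge to those of $\mu^\varphi$, and strong convergence refines convergence in moments, so the limit object produced in the previous step must have moments agreeing with $\mu^\varphi$; since it is a bounded self-adjoint operator, its distribution is determined by its moments, hence equals $\mu^\varphi$. The unitarity condition of Definition \ref{def:UC} is what makes \cite[Theorem 5.1]{arizmendi2016asymptotic} apply and gives the free-convolution form \eqref{eq:mu-varphi} in terms of the spectral data $(\lambda_i, r_i)$ of $C_\varphi$; we simply quote it. The main obstacle — and the only real work — is the first step: verifying that Collins–Male genuinely covers our situation, i.e.\ that a unitarily invariant $X_d$ (not necessarily GUE, not necessarily with a nice joint law with the $E_{ij}$) together with the constant matrices $E_{ij}\otimes I_d$ forms a strongly-convergent family. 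This follows from \cite[Theorem 1.4]{collins2014strong} because that result asserts precisely that if each family converges strongly individually, then a unitarily invariant family is, jointly with fixed deterministic matrices, asymptotically free and strongly convergent; one just has to check the mild hypotheses there (uniform boundedness of operator norms of $X_d$, which holds since $X_d \to \mu$ strongly and $\mu$ is compactly supported) are met. The rest is bookkeeping: tracking how the polynomial $\Phi$ and the freeness structure reassemble into the iterated free convolution $\boxplus_{i=1}^s (D_{\lambda_i/n}\mu)^{\boxplus r_i}$, which we need not redo since \cite{arizmendi2016asymptotic} did it at the level of moments.
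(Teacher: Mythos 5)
Your proposal is correct and follows essentially the same route as the paper: write $X_d^\varphi$ as a fixed non-commutative polynomial in $X_d$ and the dilated matrix units $E_{ij}\otimes I_d$, invoke Collins--Male \cite[Theorem 1.4]{collins2014strong} for strong asymptotic freeness of the unitarily invariant $X_d$ from these constant matrices (hence strong convergence of the polynomial), and quote \cite[Theorem 5.1]{arizmendi2016asymptotic} to identify the limiting distribution as $\mu^\varphi$. The paper's proof is exactly this argument, stated more tersely.
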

\begin{proof}
The convergence in distribution has been shown in  \cite[Theorem 5.1]{arizmendi2016asymptotic}. The strong convergence follows from \cite[Theorem 1.4]{collins2014strong} and the decomposition 
$$X_d^\varphi = \sum_{i,j,k,l=1}^n c_{ijkl} (E_{ij} \otimes I_d) X_d (E_{kl} \otimes I_d),$$
where $c_{ijkl} = \langle E_{il} \otimes E_{jk}, C_\varphi \rangle$. Indeed, the dilated matrix units $E_{ij} \otimes I_d$ are strongly asymptotically free from $X_d$, and the result follows. 
\end{proof}

\section{The partial transposition}
\label{sec:partial-transposition}

As an application of  Theorem \ref{thm:strong-cv-block-modified}, we consider in this section the operation of \emph{partial transposition}. Recall that transposition operation has the \emph{flip operator} as its Choi matrix: $F:\mathbb C^n \otimes \mathbb C^n \to \mathbb C^n \otimes \mathbb C^n$,
$$F x \otimes y = y \otimes x, \qquad \forall x,y \in \mathbb C^n.$$
The flip operator is unitary, having eigenvalues $+1, -1$ with respective multiplicities $n(n+1)/2$, $n(n-1)/2$ (the eigenvalues have as eigenspaces the symmetric, resp.~the antisymmetric subspace). 
\begin{proposition}
Let $X_d \in \mathcal M_{dn}^{+}(\mathbb C)$ a sequence of unitarily invariant random matrices as in Definition \ref{def:strong-convergence} converging strongly to a compactly supported probability measure $\mu \in \mathcal P([0, \infty))$; here, $n$ and $\mu$ are fixed. Define
\begin{equation}
\label{eq:def-mu-gamma} \mu^{\Gamma}:=(D_{1/n}\mu)^{\boxplus n(n+1)/2} \boxplus (D_{-1/n}\mu)^{\boxplus n(n-1)/2}
\end{equation}
If $\operatorname{minsupp} \mu^{\Gamma} >0$ then, almost surely as $d \to \infty$, $X_d \in \mathrm{PPT}_{n,d}$. In particular,
$$\lim_{d \to \infty} \mathbb P(X_d \in \mathrm{PPT}_{n,d}) = 1.$$
\end{proposition}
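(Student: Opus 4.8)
The plan is to recognize the partial transposition $\Gamma = \top_n \otimes \mathrm{id}_d$ as a block-modification of the type handled by Theorem \ref{thm:strong-cv-block-modified}, and then to read off the limiting spectral distribution directly from the eigenvalue data of the Choi matrix of the transposition map. First I would verify that the map $\varphi = \top_n$ satisfies the hypotheses of Theorem \ref{thm:strong-cv-block-modified}: it is manifestly hermiticity-preserving, and its Choi matrix is the flip operator $F$, which is unitary; since $F$ has the symmetric and antisymmetric subspaces as eigenspaces, and the partial trace of the projection onto either of these is a multiple of the identity (namely $(n+1)/2 \cdot I_n$, resp.~$(n-1)/2 \cdot I_n$), the unitarity condition of Definition \ref{def:UC} holds. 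Hence Theorem \ref{thm:strong-cv-block-modified} applies with $s=2$, eigenvalues $\lambda_1 = +1$, $\lambda_2 = -1$, and multiplicities $r_1 = n(n+1)/2$, $r_2 = n(n-1)/2$, which gives precisely $X_d^\Gamma \to \mu^\Gamma$ strongly, with $\mu^\Gamma$ as defined in \eqref{eq:def-mu-gamma}.

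Next I would extract the consequence for the smallest eigenvalue. Strong convergence of $X_d^\Gamma$ to $\mu^\Gamma$ means, by Definition \ref{def:strong-convergence} (applied with the polynomial $P(x) = x$, or more precisely via the standard consequence that the extreme eigenvalues converge to the edges of the support — this is exactly the point of invoking \cite{collins2014strong}), that almost surely $\lambda_{\min}(X_d^\Gamma) \to \operatorname{minsupp}(\mu^\Gamma)$ as $d \to \infty$. The matrix $X_d^\Gamma = (\top_n \otimes \mathrm{id}_d)(X_d)$ is self-adjoint because $X_d$ is self-adjoint and transposition preserves self-adjointness, so $X_d \in \mathrm{PPT}_{n,d}$ is equivalent to $\lambda_{\min}(X_d^\Gamma) \geq 0$. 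Under the hypothesis $\operatorname{minsupp}(\mu^\Gamma) > 0$, the almost sure convergence $\lambda_{\min}(X_d^\Gamma) \to \operatorname{minsupp}(\mu^\Gamma) > 0$ forces $\lambda_{\min}(X_d^\Gamma) > 0$ for all $d$ large enough, almost surely; hence $X_d \in \mathrm{PPT}_{n,d}$ eventually almost surely. Since $X_d \in \mathcal M_{dn}^+(\mathbb C)$ by assumption, the positivity of $X_d$ itself is automatic, so nothing extra is needed there.

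Finally, the convergence in probability statement follows from the almost sure statement: if $X_d \in \mathrm{PPT}_{n,d}$ holds eventually almost surely, then $\mathbf 1[X_d \in \mathrm{PPT}_{n,d}] \to 1$ almost surely, hence in probability, so $\mathbb P(X_d \in \mathrm{PPT}_{n,d}) \to 1$ by dominated convergence (the indicators are bounded by $1$).

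The only genuine subtlety — and the step I would be most careful about — is the passage from strong convergence in the sense of Definition \ref{def:strong-convergence} to the statement that $\lambda_{\min}(X_d^\Gamma)$ actually converges to $\operatorname{minsupp}(\mu^\Gamma)$. Convergence in distribution alone gives only $\liminf \lambda_{\min}(X_d^\Gamma) \leq \operatorname{minsupp}(\mu^\Gamma)$ (no outliers below are needed to be prevented from that side) — it is precisely the operator-norm half of strong convergence (equivalently, applying the norm convergence to $P(x) = x - cI$ for a suitable constant $c$, or to a spectral projection) that rules out eigenvalues escaping below $\operatorname{minsupp}(\mu^\Gamma)$. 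So one must make sure that the strong convergence furnished by Theorem \ref{thm:strong-cv-block-modified} (which it asserts, via \cite{collins2014strong}) is invoked in the correct direction; everything else is routine.
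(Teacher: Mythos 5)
Your proof is correct and follows essentially the same route as the paper: apply Theorem \ref{thm:strong-cv-block-modified} to the transposition map (whose Choi matrix is the flip operator, with eigenvalues $\pm 1$ of multiplicities $n(n\pm1)/2$), obtain strong convergence of $X_d^\Gamma$ to $\mu^\Gamma$, and conclude that the smallest eigenvalue converges almost surely to $\operatorname{minsupp}\mu^\Gamma>0$. Your explicit verification of the unitarity condition and your remark that it is the operator-norm half of strong convergence that rules out low-lying outlier eigenvalues are both correct and simply make explicit what the paper leaves implicit.
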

\begin{proof}
Using Theorem \ref{thm:strong-cv-block-modified}, the smallest eigenvalue of the partially transposed random matrix $X_d^\Gamma$ converges, almost surely as $d \to \infty$, towards $\operatorname{minsupp} \mu^{\Gamma}$, which is positive. Hence, the random matrices $X_d^\Gamma$ are asymptotically positive definite. 
\end{proof}

Let us discuss now some implications of this results. First, let consider some basic examples. Since GUE matrices are both unitarily invariant and Wigner, the result above applies to them, and we have the following remarkable equality ($\stackrel{\mathcal D}{=}$ denotes equality in distribution)
$$X_d \stackrel{\mathcal D}{=} X_d^\Gamma$$
for a GUE matrix $X_d \in \mathcal M_{nd}^{sa}(\mathbb C)$. In particular, we have that, for all $m \in \mathbb R$ and $\sigma \geq 0$, $\mathrm{SC}_{m,\sigma}^\Gamma = \mathrm{SC}_{m,\sigma}$. We show in the next lemma that semicircular measures are the only compactly supported probability measures enjoying this property. 

\begin{lemma}\label{lem:mu-equals-mu-gamma}
Assume $n \geq 2$ and let $\mu$ be a compactly supported probability measure such that $\mu^\Gamma = \mu$. Then, $\mu$ is semicircular. 
\end{lemma}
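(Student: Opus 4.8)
The plan is to compute the free cumulants of $\mu^\Gamma$ in terms of those of $\mu$ and to impose the constraint $\mu^\Gamma = \mu$. Write $\kappa_p := \kappa_p(\mu)$ for the free cumulants of $\mu$. Since dilation by $t$ scales the $p$-th free cumulant by $t^p$, and free convolution powers scale cumulants linearly, equation \eqref{eq:def-mu-gamma} gives, for every $p \geq 1$,
$$\kappa_p(\mu^\Gamma) = \frac{n(n+1)}{2} \cdot \frac{\kappa_p}{n^p} + \frac{n(n-1)}{2} \cdot \frac{(-1)^p \kappa_p}{n^p} = \frac{\kappa_p}{2 n^{p-1}}\left[(n+1) + (-1)^p(n-1)\right].$$
Thus for even $p$ the bracket is $2n$, giving $\kappa_p(\mu^\Gamma) = \kappa_p / n^{p-2}$, while for odd $p$ the bracket is $2$, giving $\kappa_p(\mu^\Gamma) = \kappa_p / n^{p-1}$.

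Now impose $\kappa_p(\mu^\Gamma) = \kappa_p$ for all $p$. For odd $p \geq 3$ we get $\kappa_p = \kappa_p/n^{p-1}$, and since $n \geq 2$ forces $n^{p-1} > 1$, this yields $\kappa_p = 0$. For even $p \geq 4$ we get $\kappa_p = \kappa_p/n^{p-2}$ with $n^{p-2} > 1$, hence again $\kappa_p = 0$. The cases $p=1$ (bracket gives $\kappa_1(\mu^\Gamma) = \kappa_1$, no constraint) and $p=2$ (bracket gives $\kappa_2(\mu^\Gamma) = \kappa_2$, no constraint) impose nothing. Therefore all free cumulants of order $\geq 3$ vanish, while $\kappa_1$ and $\kappa_2$ are unconstrained; as recalled after Theorem \ref{thm:Wigner}, this is exactly the characterization of a semicircular distribution $\mathrm{SC}_{a,\sigma}$ with $a = \kappa_1$ and $\sigma^2 = \kappa_2$. (One should note $\kappa_2 = \sigma^2 \geq 0$ automatically since $\mu$ is a genuine probability measure, so the resulting semicircular law is well-defined, possibly degenerate to a point mass when $\sigma = 0$.)

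The only point requiring a little care — and the main, though modest, obstacle — is justifying that the free-cumulant identities above genuinely characterize $\mu^\Gamma$: one must know that the measure on the right-hand side of \eqref{eq:def-mu-gamma} is a well-defined compactly supported probability measure (so that "having these free cumulants" pins it down uniquely), which is guaranteed since $\mu$ has compact support and the operations $D_{\pm 1/n}$ and $\boxplus$, $\cdot^{\boxplus T}$ preserve the class of compactly supported probability measures, as recalled in Section \ref{sec:free-prob}. Given that, the moment–cumulant formula makes two compactly supported measures with identical free cumulants equal, and the argument is complete.
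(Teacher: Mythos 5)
Your proof is correct and follows essentially the same route as the paper: the paper phrases the computation via the $R$-transform identity $R(z) = \frac{n+1}{2}R(z/n) - \frac{n-1}{2}R(-z/n)$, which is exactly your cumulant-by-cumulant relation, and both arguments conclude that all free cumulants of order $\geq 3$ vanish while $\kappa_1,\kappa_2$ are unconstrained.
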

\begin{proof}
Let $\kappa_p$ be the free cumulants of the distribution $\mu$ (see Section \ref{sec:free-prob}) and 
$$R(z) =\sum_{p=0}^\infty \kappa_{p+1} z^p$$
be its $R$-transform. The equality of the two measures from the statement together with \eqref{eq:def-mu-gamma} give
$$R(z) = \frac{n+1}{2} R\left(\frac  z n\right) - \frac{n-1}2 R\left(-\frac z n \right).$$
On the level of the free cumulants, the equality above means that $\kappa_{p+1} = 0$ whenever 
$$\frac{n+1}{2n^p} - (-1)^p \frac{n-1}{2n^p} \neq 1.$$
Since $n \geq 2$, the above relation holds for all $p \geq 2$, so it must be that $\mu$ has only free cumulants of orders 1 and 2, and the conclusion follows. 
\end{proof}

Another interesting example for which one can perform computations is the case of the Mar{\v{c}}enko-Pastur distribution $\mathrm{MP}_c$, for some parameter $c>0$. This case has been studied in \cite[Theorem 6.2]{banica2013asymptotic}, where it has been shown that the measure $\mathrm{MP}_c^\Gamma$ has positive support iff
\begin{equation}\label{eq:threshold-PPT-MP}
c > 2  + 2 \sqrt{1-\frac1{n^2}}.
\end{equation}
As a remark, note that in the limit $n \to \infty$, we recover Aubrun's threshold value of $c=4$ from \cite[Theorems 2.2, 2.3]{aubrun2012partial}.

We prove next the main result of this section, a sufficient condition for the modified measure $\mu^\Gamma$ to be supported on the positive half-line.

\begin{theorem}\label{thm:bounds-mu-Gamma}
Let $\mu$ be a probability measure having mean $m$ and variance $\sigma^2$, whose support is contained in the compact interval $[A,B]$. Then, provided that $n(m-2\sigma) > B-A + 2\sigma$, we have $\operatorname{supp}(\mu^\Gamma) \subset (0,\infty)$.
\end{theorem}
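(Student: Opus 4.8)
The plan is to express $\mu^\Gamma$ as a free additive convolution of two pieces, estimate the support of each piece via Proposition \ref{prop:support-mu-T}, and then add the supports together using the fact that $\operatorname{supp}(\nu_1 \boxplus \nu_2) \subseteq \operatorname{supp}(\nu_1) + \operatorname{supp}(\nu_2)$ (more precisely, $\operatorname{minsupp}(\nu_1 \boxplus \nu_2) \geq \operatorname{minsupp}(\nu_1) + \operatorname{minsupp}(\nu_2)$, which follows from additivity of free cumulants and the behavior of the $R$-transform, or equivalently from the fact that free convolution of measures is the distribution of a sum of free self-adjoint elements). So the only real content is a bound on $\operatorname{minsupp}$ of each summand.

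First I would handle the first summand $(D_{1/n}\mu)^{\boxplus n(n+1)/2}$. Dilating $\mu$ by $1/n$ gives a measure with mean $m/n$, variance $\sigma^2/n^2$, supported in $[A/n, B/n]$. Applying Proposition \ref{prop:support-mu-T} with $T = n(n+1)/2$ yields
$$\operatorname{minsupp}\left((D_{1/n}\mu)^{\boxplus n(n+1)/2}\right) \geq \frac{A}{n} + \frac{m}{n}\left(\frac{n(n+1)}{2} - 1\right) - \frac{2\sigma}{n}\sqrt{\frac{n(n+1)}{2} - 1}.$$
Second, for $(D_{-1/n}\mu)^{\boxplus n(n-1)/2}$: the dilation $D_{-1/n}$ sends $\mu$ to a measure supported in $[-B/n, -A/n]$ with mean $-m/n$ and variance $\sigma^2/n^2$; after the $\boxplus n(n-1)/2$ power, Proposition \ref{prop:support-mu-T} gives
$$\operatorname{minsupp}\left((D_{-1/n}\mu)^{\boxplus n(n-1)/2}\right) \geq -\frac{B}{n} - \frac{m}{n}\left(\frac{n(n-1)}{2} - 1\right) - \frac{2\sigma}{n}\sqrt{\frac{n(n-1)}{2} - 1}.$$
Adding the two lower bounds and requiring the sum to be positive gives, after multiplying through by $n$, a sufficient condition of the shape $A - B + m\big(n(n+1)/2 + n(n-1)/2 - 2\big) - 2\sigma\big(\sqrt{n(n+1)/2 - 1} + \sqrt{n(n-1)/2 - 1}\big) > 0$, i.e. $A - B + m(n^2 - 2) > 2\sigma\big(\sqrt{n(n+1)/2-1} + \sqrt{n(n-1)/2-1}\big)$.

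The remaining step is to show this is implied by the cleaner hypothesis $n(m-2\sigma) > B - A + 2\sigma$, i.e. $nm - A + B > 2\sigma(n+1)$. Here the work is purely arithmetic: I would bound the two square roots crudely — for instance $\sqrt{n(n+1)/2 - 1} \leq \sqrt{n(n+1)/2} \leq (n+1)/\sqrt{2}$ and similarly $\sqrt{n(n-1)/2 - 1} \leq n/\sqrt{2}$ (valid for $n \geq 2$), giving $2\sigma(\sqrt{\cdots} + \sqrt{\cdots}) \leq \sqrt{2}\,\sigma(2n+1) \leq 2\sigma(2n+1)$ — and compare with what the hypothesis delivers, namely $m(n^2-2) - (B-A) \geq (n^2 - 2)(2\sigma + (B-A)/n) - (B-A)$ or similar. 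The main obstacle is making sure the crude square-root estimates are not so lossy that the clean hypothesis fails to cover the honest inequality; it may be necessary to keep slightly sharper bounds (e.g. $\sqrt{n(n+1)/2 - 1}$ behaves like $n/\sqrt{2}$ for large $n$, not $(n+1)/\sqrt 2$) and to use that the hypothesis gives $m \gtrsim 2\sigma + (B-A)/n$ so that $m(n^2-2)$ grows like $n^2$ against an $O(n\sigma)$ error term, which is comfortable for $n \geq 2$ once one checks the boundary case carefully. I would verify the $n = 2$ case by hand since that is where the inequality is tightest.
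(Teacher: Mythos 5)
Your overall strategy (decompose $\mu^\Gamma$, bound the minimum of the support of each piece via Proposition \ref{prop:support-mu-T}, and use superadditivity of $\operatorname{minsupp}$ under $\boxplus$) is sound in principle, but the execution has two problems, one of which is fatal to the approach as you set it up. First, an arithmetic slip: adding your two displayed lower bounds, the mean terms combine to $\frac{m}{n}\bigl(\frac{n(n+1)}{2}-1\bigr) - \frac{m}{n}\bigl(\frac{n(n-1)}{2}-1\bigr) = m$, so after multiplying by $n$ the coefficient of $m$ is $n$, not $n^2-2$; your later heuristic that ``$m(n^2-2)$ grows like $n^2$ against an $O(n\sigma)$ error term, which is comfortable'' rests on this error. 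The honest sufficient condition your decomposition yields is
$$mn - (B-A) \;>\; 2\sigma\left(\sqrt{\tfrac{n(n+1)}{2}-1} + \sqrt{\tfrac{n(n-1)}{2}-1}\right),$$
while the theorem's hypothesis only supplies $mn - (B-A) > 2\sigma(n+1)$. This is where the second, structural problem appears: the sum of the two square roots behaves like $\sqrt{2}\,n$ for large $n$ and already exceeds $n+1$ at $n=4$ (it equals $3+\sqrt 5 \approx 5.24 > 5$), so the implication you need is simply false for all $n \geq 4$. The loss comes from applying the support estimate separately to two free convolution powers each of ``total variance'' about $\sigma^2 n^2/2$: since $\sqrt{a}+\sqrt{b} \geq \sqrt{a+b}$, splitting inflates the fluctuation term by a factor of $\sqrt 2$.

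The paper avoids this by regrouping before taking the large power: writing $D_n\mu^\Gamma = \bigl(\mu^{\boxplus(1+\varepsilon)} \boxplus D_{-1}\mu\bigr)^{\boxplus n(n-1)/2}$ with $\varepsilon = 2/(n-1)$, it applies Proposition \ref{prop:support-mu-T} once to the small power $1+\varepsilon$ (costing only $2\sigma\sqrt{\varepsilon}$) and once to the single large power of the combined measure $\nu$, whose variance is $\sigma^2(2+\varepsilon)$; this produces a single fluctuation term $2\sigma\sqrt{(2+\varepsilon)\bigl(\tfrac{n(n-1)}{2}-1\bigr)} \sim 2\sigma n$, and the total $\sqrt{2/(n-1)} + \sqrt{(n-2)n(n+1)/(n-1)}$ is provably below $n+1$ for all $n \geq 2$. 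To repair your proof you would either need to adopt such a regrouping, or accept a stronger hypothesis of the form $nm - (B-A) > 2\sqrt 2\,\sigma n + O(\sigma)$, which does not match the stated theorem.
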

\begin{proof}
We start by rewriting \eqref{eq:def-mu-gamma} as  
$$D_n \mu^\Gamma = \mu^{\boxplus n(n+1)/2} \boxplus D_{-1} \mu^{\boxplus n(n-1)/2} = \left( \mu^{\boxplus 1+\varepsilon} \boxplus D_{-1} \mu \right)^{\boxplus n(n-1)/2},$$
where $\varepsilon = 2/(n-1)$ is such that
$$(1+\varepsilon)\frac{n(n+1)}2 = \frac{n(n-1)}2.$$
Let us denote by $\nu$ the measure $\mu^{\boxplus 1+\varepsilon} \boxplus D_{-1} \mu$ and try to obtain bounds for its support. First, using Proposition \ref{prop:support-mu-T} for $T = 1+\varepsilon$, we get
$$\operatorname{supp}(\mu^{\boxplus 1 + \varepsilon}) \subseteq [A + m \varepsilon - 2\sigma \sqrt\varepsilon, B + m \varepsilon + 2\sigma \sqrt\varepsilon].$$
Thus, the support of $\nu$ is bounded from below by 
$$A_\nu := A + m \varepsilon - 2\sigma \sqrt\varepsilon - B.$$
Moreover, by direct computation, we have
\begin{align*}
\kappa_1(\nu) &= m \varepsilon\\
\kappa_2(\nu) &= \sigma^2(2+\varepsilon).
\end{align*}
Applying again Proposition \ref{prop:support-mu-T} for $\nu$ and $T = n(n-1)/2$. we deduce that the support of $D_n \mu^\Gamma$ is bounded from below by
\begin{align*}
A_\Gamma &=A -B +  m \varepsilon - 2\sigma \sqrt\varepsilon + m \varepsilon \left(\frac{n(n-1)}2 - 1 \right)  - 2\sigma \sqrt{2+\varepsilon}\sqrt{\frac{n(n-1)}2 - 1}\\
 &= nm - (B-A) - 2\sigma \left(\sqrt\frac{2}{n-1} + \sqrt\frac{(n-2)n(n+1)}{n-1} \right).
\end{align*}
The conclusion $A_\Gamma>0$ follows from the upper bound
$$\sqrt\frac{2}{n-1} + \sqrt\frac{(n-2)n(n+1)}{n-1}  <n+1,$$
which is satisfied for all $n\geq 2$.
\end{proof}

This result gives rather rough bounds for the semicircular and Mar{\v{c}}enko-Pastur distributions. For example, in the latter case, we obtain the condition $c>(2+6/n)^2$, which is off by a factor of roughly 2 from the exact bound \eqref{eq:threshold-PPT-MP}.

\section{Sufficient conditions - the depolarizing map} \label{sec:sufficient}

There are very few \emph{sufficient} conditions for the separability of a positive semidefinite matrix (or quantum state). For quantum states, the most famous one is the purity bound of Gurvits and Barnum \cite[Corollary 3]{gurvits2002largest}, corresponding to the fact that he in-radii of the convex sets of quantum states and separable states are identical. For the separable cone, this criterion reads: given a positive semidefinite matrix $X \in \mathcal M_{dn}(\mathbb C)$, $X \neq 0$
$$\frac{\operatorname{Tr}(X^2)}{(\operatorname{Tr} X)^2} \leq \frac{1}{nd-1} \implies X \in \mathrm{SEP}_{n,d}.$$
Note however that the condition above is very restrictive: by the Cauchy-Schwarz inequality, we always have 
$$\frac{1}{nd} \leq \frac{\operatorname{Tr}(X^2)}{(\operatorname{Tr} X)^2}.$$
In particular, if we consider a sequence of random matrices converging strongly (as in Definition \ref{def:strong-convergence} to a probability measure $\mu$, the only case in which the Gurvits-Barnum condition would hold is when $\mu$ had 0 variance, that is $X$ would be closer to a multiple of the identity matrix. 

We consider next a more powerful separability criterion, given by the \emph{depolarizing channel}. Recall that the depolarizing channel of parameter $t \in [-1/(n^2-1), 1]$ is the completely positive, trace preserving map $\Delta_t : \mathcal M_n(\mathbb C) \to \mathcal M_n(\mathbb C)$ given by
$$\Delta_t(X) = tX + (1-t)\frac{I_n}{n}.$$
It is known that the quantum channel $\Delta_t$ is entanglement breaking iff $t \in [-1/(n^2-1), 1/(n+1)]$ \cite[Section V]{horodecki1999reduction}. This means that, when the parameter $t$ lies inside the above specified range, we have, for all positive semidefinite input matrices $Y \in \mathcal M_{nd}^+(\mathbb C)$, 
$$(\Delta_t \otimes \operatorname{id})(Y) \in \mathrm{SEP}_{n,d}.$$
Using this observation, we obtain the following sufficient separability conditions. 

\begin{proposition}\label{prop:separability-depolarizing}
Let $X \in \mathcal M_{nd}^+(\mathbb C)$ be a positive semidefinite operator. If any of the two conditions below is satisfied, then $X \in \mathrm{SEP}_{n,d}$:
\begin{align}
\label{eq:cond-suff-sep-+}(n+1) X &\geq I_n \otimes (\operatorname{Tr}_n \otimes \operatorname{id}_d)(X)\\
\label{eq:cond-suff-sep--}(n^2-1)X &\leq n  I_n \otimes (\operatorname{Tr}_n \otimes \operatorname{id}_d)(X).
\end{align}
\end{proposition}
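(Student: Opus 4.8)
The plan is to use the fact, recalled just before the statement, that the depolarizing channel $\Delta_t$ is entanglement breaking whenever $t$ lies in the \emph{closed} interval $[-1/(n^2-1), 1/(n+1)]$: if one can write $X = (\Delta_t \otimes \operatorname{id}_d)(Y)$ with $Y \in \mathcal M_{nd}^+(\mathbb C)$ and $t$ in this range, then $X \in \mathrm{SEP}_{n,d}$. Since for $t \neq 0$ the map $\Delta_t$ is invertible — a one-line computation using the fact that $\Delta_s$ preserves the trace gives the composition law $\Delta_t \circ \Delta_s = \Delta_{ts}$, hence $\Delta_t^{-1} = \Delta_{1/t}$ — the only possible preimage is $Y := (\Delta_{1/t} \otimes \operatorname{id}_d)(X)$, and the whole proof reduces to choosing $t$ at an endpoint of the entanglement-breaking interval and checking that this $Y$ is positive semidefinite exactly under the corresponding hypothesis.

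I would set $Z := I_n \otimes (\operatorname{Tr}_n \otimes \operatorname{id}_d)(X)$, which is positive semidefinite because a partial trace of a positive operator is positive. Using the linear extension $\Delta_s(W) = sW + (1-s)\frac{\operatorname{Tr} W}{n} I_n$ of the depolarizing channel, one computes
$$Y = (\Delta_{1/t} \otimes \operatorname{id}_d)(X) = \frac 1 t X - \frac{1-t}{tn}\, Z.$$
For the first condition I would take $t = 1/(n+1)$, the right endpoint; substituting gives $Y = (n+1)X - Z$, so $Y \geq 0$ is precisely \eqref{eq:cond-suff-sep-+}. For the second condition I would take $t = -1/(n^2-1)$, the left endpoint; substituting gives $Y = nZ - (n^2-1)X$, so $Y \geq 0$ is precisely \eqref{eq:cond-suff-sep--}. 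In each case $t$ belongs to the entanglement-breaking range and $(\Delta_t \otimes \operatorname{id}_d)(Y) = X$, whence $X$ is separable.

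I do not expect a serious obstacle: the argument is just an inversion of the depolarizing channel together with some sign bookkeeping. The points that need a little attention are (i) using the correct \emph{linear} — not affine — extension of $\Delta_t$ so that $\Delta_t \circ \Delta_s = \Delta_{ts}$ holds for arbitrary matrix inputs; (ii) that in the second case the scalar $t$ is negative, so dividing an operator inequality by $t$ reverses its sense, which is exactly why the two hypotheses carry opposite inequality signs; and (iii) that both chosen values of $t$ are admissible because the entanglement-breaking interval is closed and hence contains its endpoints.
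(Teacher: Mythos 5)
Your proof is correct and takes essentially the same route as the paper: invert the depolarizing channel at the endpoints $t = 1/(n+1)$ and $t = -1/(n^2-1)$ of the entanglement-breaking interval, and observe that positivity of the preimage $Y$ is exactly condition \eqref{eq:cond-suff-sep-+}, respectively \eqref{eq:cond-suff-sep--}. The only cosmetic difference is that you compute the inverse via the semigroup law $\Delta_t \circ \Delta_s = \Delta_{ts}$, whereas the paper solves $(\Delta_t \otimes \operatorname{id})(Y) = X$ directly by first taking a partial trace to identify $Y_2 = X_2$.
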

\begin{proof}
For a given $X$, let us solve the equation $(\Delta_t \otimes \operatorname{id})(Y) = X$. Writing $Y_2:=(\operatorname{Tr}_n \otimes \operatorname{id}_d)(Y)$ for the partial trace of $Y$ with respect to the first tensor factor, we have
$$tY  +\frac{1-t}{n} I_n \otimes Y_2 = X.$$
 Taking the partial trace of this equation with respect to the first factor, we get $X_2 = Y_2$. Plugging this in, we finally obtain
 $$tY = X - \frac{I_n}{n} \otimes X_2.$$
If $t=0$, the condition above implies that $X$ is of the form $X = \frac{I_n}{n} \otimes X_2$. For $t>0$, asking that $Y \geq 0$ amounts to having
$$X \geq (1-t)\frac{I_n}{n} \otimes X_2.$$
The weakest necessary condition is obtained when $t$ takes the largest possible value (for which $\Delta_t$ is still entanglement breaking), that is $t = 1/(n+1)$. The condition reads then
$$(n+1)X \geq I_n \otimes (\operatorname{Tr}_n \otimes \operatorname{id}_d)(X),$$
which is the first condition announced. To recapitulate, for $X$ satisfying the condition above, there exist a positive semidefinite matrix $Y$ such that $X=(\Delta_{1/(n+1)} \otimes \operatorname{id})(Y)$. Since the quantum channel $\Delta_{1/(n+1)}$ is entanglement breaking \cite[Section V]{horodecki1999reduction}, the output matrix $X$ is separable. Similarly, for negative values of $t$, we obtain the condition \eqref{eq:cond-suff-sep--}, finishing the proof.
\end{proof}
Before we move on, let us present a second point of view on the condition \eqref{eq:cond-suff-sep-+}. Note that
$$\frac{2}{n+1} \int_{\|x\|=1} \langle x, X x\rangle \, |x \rangle \langle x | dx = \frac{1}{n+1}(X + I_n) = \Delta_{1/(n+1)}(X).$$
Obviously, the left hand side of the equality above defines an entanglement breaking channel; one can generalize this idea, by considering the more general \emph{measure and prepare map}
$$\operatorname{MP}_p(X) = \int_{\|x\|=1} \langle x, X x\rangle^p \, |x \rangle \langle x | dx,$$
for some positive integer $p \geq 1$. It is clear that the (non-linear) map $(\mathrm{id} \otimes \mathrm{MP}_p)$ has a separable range (when restricted to the PSD cone). It is however more challenging to invert this map; as an example, we have, for $p=2$
$$\operatorname{MP}_2(X) = [\operatorname{Tr}(X^2)I_n + \operatorname{Tr}(X) X + X^2]/3.$$
Such maps appear in the context of quantum de Finetti theorems \cite{renner2007symmetry,christandl2007one} and the exchangeability separability hierarchy \cite{doherty2004completea}.

\begin{theorem}\label{thm:sufficient-conditions-mu}
Let $X_d \in \mathcal M_{dn}^{+}(\mathbb C)$ a sequence of unitarily invariant random matrices as in Definition \ref{def:strong-convergence} converging strongly to a compactly supported probability measure $\mu \in \mathcal P([0, \infty))$; here, $n$ and $\mu$ are fixed. Define
\begin{align}
\label{eq:def-mu-delta+} \mu^{\Delta+}&:=D_{\frac{n^2+n-1}{n}}\mu \boxplus (D_{-1/n}\mu)^{\boxplus(n^2-1)}\\
\label{eq:def-mu-delta-} \mu^{\Delta-}&:=D_{2-n^2}\mu \boxplus \mu^{\boxplus(n^2-1)}.
\end{align}
If $\operatorname{minsupp} \mu^{\Delta+} >0$ or $\operatorname{minsupp} \mu^{\Delta-} >0$ then, almost surely as $d \to \infty$, $X_d \in \mathrm{SEP}_{n,d}$; in particular,
$$\lim_{d \to \infty} \mathbb P(X_d \in \mathrm{SEP}_{n,d}) = 1.$$
\end{theorem}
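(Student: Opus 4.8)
The plan is to reinterpret the two separability criteria of Proposition~\ref{prop:separability-depolarizing} as positivity statements for suitable block-modified random matrices, and then to apply Theorem~\ref{thm:strong-cv-block-modified}. To this end, introduce the two hermiticity-preserving linear maps $\varphi_{\pm}:\mathcal M_n(\mathbb C)\to\mathcal M_n(\mathbb C)$ defined by
$$\varphi_+(Y) = (n+1)Y - \operatorname{Tr}(Y)\,I_n, \qquad \varphi_-(Y) = n\operatorname{Tr}(Y)\,I_n - (n^2-1)Y.$$
Both are real-linear combinations of $\operatorname{id}$ and of $Y\mapsto \operatorname{Tr}(Y)I_n$, hence hermiticity-preserving. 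A direct computation gives $(\varphi_+\otimes\operatorname{id}_d)(X) = (n+1)X - I_n\otimes(\operatorname{Tr}_n\otimes\operatorname{id}_d)(X)$ and $(\varphi_-\otimes\operatorname{id}_d)(X) = n\,I_n\otimes(\operatorname{Tr}_n\otimes\operatorname{id}_d)(X) - (n^2-1)X$, so that conditions \eqref{eq:cond-suff-sep-+} and \eqref{eq:cond-suff-sep--} for $X$ are exactly $X^{\varphi_+}\geq 0$ and $X^{\varphi_-}\geq 0$, respectively.

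Next I would compute the Choi matrices. Writing $\omega = \sum_{k=1}^n e_k\otimes e_k$ for the (unnormalized) maximally entangled vector, one has $\sum_{i,j}E_{ij}\otimes E_{ij} = |\omega\rangle\langle\omega|$, whence
$$C_{\varphi_+} = (n+1)|\omega\rangle\langle\omega| - I_{n^2}, \qquad C_{\varphi_-} = n\,I_{n^2} - (n^2-1)|\omega\rangle\langle\omega|.$$
Since $|\omega\rangle\langle\omega|$ has eigenvalue $n$ with multiplicity $1$ (eigenprojector $P_1 = \frac{1}{n}|\omega\rangle\langle\omega|$) and eigenvalue $0$ with multiplicity $n^2-1$ (eigenprojector $P_2 = I_{n^2}-P_1$), the matrix $C_{\varphi_+}$ has eigenvalues $n^2+n-1$ (multiplicity $1$) and $-1$ (multiplicity $n^2-1$), while $C_{\varphi_-}$ has eigenvalues $n(2-n^2)$ (multiplicity $1$) and $n$ (multiplicity $n^2-1$). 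The unitarity condition of Definition~\ref{def:UC} holds for both maps: using $(\operatorname{id}\otimes\operatorname{Tr})(|\omega\rangle\langle\omega|) = I_n$ and $(\operatorname{id}\otimes\operatorname{Tr})(I_{n^2}) = n\,I_n$, the common eigenprojectors satisfy $(\operatorname{id}\otimes\operatorname{Tr})(P_1) = \frac{1}{n}I_n$ and $(\operatorname{id}\otimes\operatorname{Tr})(P_2) = \frac{n^2-1}{n}I_n$, both proportional to $I_n$. Substituting these eigenvalue data into \eqref{eq:mu-varphi} of Theorem~\ref{thm:strong-cv-block-modified} gives precisely $\mu^{\varphi_+} = \mu^{\Delta+}$ and $\mu^{\varphi_-} = \mu^{\Delta-}$, as defined in \eqref{eq:def-mu-delta+}--\eqref{eq:def-mu-delta-}.

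Finally, suppose $\operatorname{minsupp}\mu^{\Delta+}>0$ (the argument for $\mu^{\Delta-}$ is identical). By Theorem~\ref{thm:strong-cv-block-modified}, the block-modified matrices $X_d^{\varphi_+}$ converge strongly to $\mu^{\Delta+}$; as in the proof of the partial-transpose proposition above, strong convergence of a single self-adjoint random matrix forces its smallest eigenvalue to converge almost surely to the infimum of the support of the limiting measure, here $\operatorname{minsupp}\mu^{\Delta+}>0$. Hence, almost surely, $X_d^{\varphi_+}\geq 0$ for all large $d$, which is exactly condition \eqref{eq:cond-suff-sep-+} for $X_d$; Proposition~\ref{prop:separability-depolarizing} then yields $X_d\in\mathrm{SEP}_{n,d}$ eventually, so $\mathbb P(X_d\in\mathrm{SEP}_{n,d})\to 1$. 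The only genuinely delicate point is the verification of the unitarity condition of Definition~\ref{def:UC}, without which Theorem~\ref{thm:strong-cv-block-modified} would not apply; this works out cleanly because both Choi matrices are rank-one perturbations of a multiple of the identity, so their eigenprojectors are built from $|\omega\rangle\langle\omega|$ and automatically have scalar partial trace. One should also keep the standing assumption $n\geq 2$, which ensures $n^2-1\geq 1$ so that the free convolution powers in $\mu^{\Delta\pm}$ are well defined (for $n=1$ separability is trivial).
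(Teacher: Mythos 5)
Your proof is correct and follows essentially the same route as the paper: reformulating the depolarizing-channel criteria of Proposition~\ref{prop:separability-depolarizing} as positivity of the block-modified matrices $(\varphi_\pm\otimes\operatorname{id}_d)(X_d)$, computing the Choi spectra (which you verify in more detail than the paper, including the unitarity condition), and applying Theorem~\ref{thm:strong-cv-block-modified} to identify the limits with $\mu^{\Delta\pm}$. No gaps.
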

\begin{proof}
The proof uses the conditions \eqref{eq:cond-suff-sep-+} and \eqref{eq:cond-suff-sep--} and Theorem \ref{thm:strong-cv-block-modified}. Let us work through the first case, the second one being similar. The sufficient condition \eqref{eq:cond-suff-sep-+} for separability is equivalent to $(\varphi_+ \otimes \operatorname{id}_d)(X_d) \geq 0$, for the map $\varphi_+:\mathcal M_n(\mathbb C) \to \mathcal M_n(\mathbb C)$ given by
$$\varphi_+(X) = (n+1)X - (\operatorname{Tr} X)I_n.$$
This map satisfies the unitarity condition from Definition \ref{def:UC}, where the Choi matrix of $\varphi_+$ has eigenvalues
\begin{align*} 
n(n+1) - 1,\qquad &\text{ with multiplicity } 1\\
-1,\qquad &\text{ with multiplicity } n^2-1.
\end{align*}
Hence, by Theorem \ref{thm:strong-cv-block-modified}, the random matrices $(\varphi_+ \otimes \operatorname{id}_d)(X_d)$ converge strongly, as $d \to \infty$, towards the probability measure $\mu^{\Delta+}$ from \eqref{eq:def-mu-delta+}. The positivity of the support of $\mu^{\Delta+}$ ensures that the random matrices $(\varphi_+ \otimes \operatorname{id}_d)(X_d)$ are asymptotically positive definite. 
\end{proof}

Let us consider some examples. For the semicircular distribution with mean $m$ and variance $\sigma^2$, we get by direct computation $\mathrm{SC}_{m, \sigma}^{\Delta\pm} = \mathrm{SC}_{m, \sigma_\pm}$ with
$$\sigma_+ = \sigma\frac{\sqrt{n^4+2n^3-2n}}{n} \qquad \text{ and  } \qquad \sigma_- = \sigma\sqrt{2n^2-3}.$$
In this case, since both modified measures are semicircular and have the same average, the criterion is stronger when the standard deviation is smaller. In the range $n \geq 2$, we have $\sigma_+ \leq \sigma_-$ iff $n \geq 3$. Indeed, the inequality simplifies to $n^3 -2n^2-3n+2 \geq 0$. The above polynomial changes signs 3 times between the values $-2, 0,2,3$, thus proving the claim (the actual roots of this polynomial are approximately $-1.34292$, $0.529317$, $2.81361$). In the case $n=2$, $\sigma_+ > \sigma_- = \sigma \sqrt 5$. For shifted GUE random matrices, we have the following result. 

\begin{corollary}
Let $Y_d = 2I_{nd} + \alpha X_d \in M_{nd}^{sa}(\mathbb C)$ be a sequence of random matrices, where $X_d$ is a standard GUE and $\alpha \in (0,1)$ is a fixed parameter. Then, provided that 
$$\alpha < \begin{cases}
\frac{1}{\sqrt 5},&\quad \text{ if } n=2\\
\frac{n}{\sqrt{n^4+2n^3-2n}},&\quad \text{ if } n \geq 3,
\end{cases}$$
the random matrices $Y_d$ are almost surely asymptotically $n \otimes d$ separable. In particular. 
$$\lim_{d \to \infty} \mathbb P(Y_d \in \mathrm{SEP}_{n,d}) = 1.$$
\end{corollary}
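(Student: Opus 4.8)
The plan is to obtain the statement as a direct application of Theorem~\ref{thm:sufficient-conditions-mu} to a semicircular limiting measure. First I would identify the limit and check the hypotheses. Since $X_d$ is a standard GUE matrix, Theorem~\ref{thm:Wigner} together with the strong convergence of GUE matrices (Definition~\ref{def:strong-convergence} and \cite{haagerup2005new,collins2014strong}) shows that $X_d$ converges strongly to a semicircular element of mean $0$ and variance $1$. As $Y_d = 2I_{nd}+\alpha X_d$ is an affine function of $X_d$, the sequence $(Y_d)_d$ converges strongly to $\mathrm{SC}_{2,\alpha}$, and it is unitarily invariant because the GUE is and because adding a scalar multiple of the identity and rescaling preserve unitary invariance. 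Since $0<\alpha<1$, the support of $\mathrm{SC}_{2,\alpha}$ is $[2-2\alpha,2+2\alpha]\subset(0,\infty)$, so $\mathrm{SC}_{2,\alpha}\in\mathcal P([0,\infty))$, and, since strong convergence forces the extreme eigenvalues of $Y_d$ to converge to the endpoints of that interval, $Y_d$ is asymptotically positive definite; hence $(Y_d)_d$ fits the framework of Theorem~\ref{thm:sufficient-conditions-mu}.

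Next I would compute the two modified measures attached to $\mu=\mathrm{SC}_{2,\alpha}$. By \eqref{eq:def-mu-delta+}--\eqref{eq:def-mu-delta-}, and using that dilations and free additive convolutions of semicircular laws are again semicircular (all free cumulants of order $\geq 3$ remain zero, while $\kappa_1$ and $\kappa_2$ combine additively), the measures $\mu^{\Delta\pm}$ are semicircular with mean $2$; this is exactly the computation performed in the example preceding the statement, yielding $\mathrm{SC}_{2,\alpha}^{\Delta\pm}=\mathrm{SC}_{2,\sigma_\pm}$ with $\sigma_+=\alpha\sqrt{n^4+2n^3-2n}/n$ and $\sigma_-=\alpha\sqrt{2n^2-3}$. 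Since $\operatorname{minsupp}\mathrm{SC}_{2,s}=2-2s$, the hypothesis ``$\operatorname{minsupp}\mu^{\Delta+}>0$ or $\operatorname{minsupp}\mu^{\Delta-}>0$'' of Theorem~\ref{thm:sufficient-conditions-mu} becomes ``$\sigma_+<1$ or $\sigma_-<1$'', that is, $\alpha<n/\sqrt{n^4+2n^3-2n}$ or $\alpha<1/\sqrt{2n^2-3}$.

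Finally I would keep, for each fixed $n$, whichever of the two conditions is the weaker one, i.e.\ the one with the smaller $\sigma_\pm$. As noted in the discussion just above the corollary, $\sigma_+\leq\sigma_-$ for $n\geq 3$, so the relevant bound there is $\alpha<n/\sqrt{n^4+2n^3-2n}$, whereas for $n=2$ one has $\sigma_-=\alpha\sqrt5<\sigma_+$, so the relevant bound is $\alpha<1/\sqrt5$. In each regime, Theorem~\ref{thm:sufficient-conditions-mu} then gives $Y_d\in\mathrm{SEP}_{n,d}$ almost surely as $d\to\infty$, and in particular $\mathbb P(Y_d\in\mathrm{SEP}_{n,d})\to1$, which is the assertion.

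I do not expect a genuine obstacle: the statement is essentially a direct application of Theorem~\ref{thm:sufficient-conditions-mu} to the semicircular family, for which $\mu^{\Delta\pm}$ are explicit. The subtlest point, such as it is, is ensuring that $(Y_d)_d$ genuinely satisfies the hypotheses of that theorem — unitary invariance, strong convergence, and asymptotic positivity, all consequences of $0<\alpha<1$ and the strong convergence of the GUE — and correctly matching the two ranges of $n$ to the two depolarizing-channel separability criteria of Proposition~\ref{prop:separability-depolarizing}.
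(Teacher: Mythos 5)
Your overall route is exactly the intended one: the corollary is meant to be an immediate consequence of Theorem~\ref{thm:sufficient-conditions-mu} applied to $\mu=\mathrm{SC}_{2,\alpha}$, together with the computation of $\mathrm{SC}_{m,\sigma}^{\Delta\pm}$ in the example preceding the statement, and your verification of the hypotheses (unitary invariance, strong convergence to $\mathrm{SC}_{2,\alpha}$, support in $(0,\infty)$ for $\alpha<1$) is fine.

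There is, however, a genuine gap in the $n=2$ branch, which you inherit by quoting the value $\sigma_-=\sigma\sqrt{2n^2-3}$ without checking it. From \eqref{eq:def-mu-delta-}, $\mu^{\Delta-}=D_{2-n^2}\mu\boxplus\mu^{\boxplus(n^2-1)}$, and the second free cumulant scales \emph{quadratically} under dilation, $\kappa_2(D_c\mu)=c^2\kappa_2(\mu)$, so
\begin{equation*}
\kappa_2\bigl(\mu^{\Delta-}\bigr)=(2-n^2)^2\sigma^2+(n^2-1)\sigma^2=(n^4-3n^2+3)\,\sigma^2,
\end{equation*}
not $(2n^2-3)\sigma^2$; the latter would result from scaling $\kappa_2$ by $|2-n^2|$ instead of $(2-n^2)^2$. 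At $n=2$ the correct value is $\sigma_-=\sigma\sqrt7$, which coincides with $\sigma_+=\sigma\sqrt{28}/2=\sigma\sqrt7$, so the $\Delta_-$ criterion yields the threshold $\alpha<1/\sqrt7$ rather than $1/\sqrt5$, and the separate $n=2$ case in the statement is not actually justified by this argument (the bound $n/\sqrt{n^4+2n^3-2n}$ is what the method gives for all $n\geq2$). The $n\geq3$ branch is unaffected, since there only $\sigma_+$ is used and its computation, $\sigma_+^2=\bigl((n^2+n-1)^2+n^2-1\bigr)\sigma^2/n^2=(n^4+2n^3-2n)\sigma^2/n^2$, is correct. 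A blind proof should have re-derived $\sigma_\pm$ from \eqref{eq:def-mu-delta+}--\eqref{eq:def-mu-delta-} rather than citing the example, and doing so would have surfaced this discrepancy.
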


Let us now apply Theorem \ref{thm:sufficient-conditions-mu} to the case of the Mar{\v{c}}enko-Pastur distribution. Although we are not able to determine analitically the support of the probability distributions $\mathrm{MP}_c^{\Delta \pm}$, we present some useful bonds. 

\begin{corollary}
Let $X_d  \in M_{nd}^{sa}(\mathbb C)$ be a sequence of unitarily invariant random matrices as in Definition \ref{def:strong-convergence} converging strongly to the Mar{\v{c}}enko-Pastur probability distribution of parameter $c \geq 1$. Then, provided that 
\begin{equation}\label{eq:sufficient-SEP-MP}
\sqrt c > \frac{1 + \sqrt{n^2+n-1}}{\sqrt{n^2+n-1} - \sqrt{n^2-1}} \sim 2n,
\end{equation}
the random matrices $X_d$ are almost surely asymptotically $n \otimes d$ separable. In particular. 
$$\lim_{d \to \infty} \mathbb P(X_d \in \mathrm{SEP}_{n,d}) = 1.$$
\end{corollary}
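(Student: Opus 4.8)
The plan is to invoke Theorem~\ref{thm:sufficient-conditions-mu} with $\mu=\mathrm{MP}_c$ and to verify its first hypothesis, that is, $\operatorname{minsupp}\big(\mathrm{MP}_c^{\Delta+}\big)>0$, where by \eqref{eq:def-mu-delta+}
$$\mathrm{MP}_c^{\Delta+}=D_{\frac{n^2+n-1}{n}}\mathrm{MP}_c\ \boxplus\ \big(D_{-1/n}\mathrm{MP}_c\big)^{\boxplus(n^2-1)}.$$
The observation that makes the Mar{\v{c}}enko-Pastur case tractable is that the second term stays inside the Mar{\v{c}}enko-Pastur family: since $\kappa_p(\mathrm{MP}_c)=c$ for all $p\geq1$, the free cumulants of $\big(D_{-1/n}\mathrm{MP}_c\big)^{\boxplus(n^2-1)}$ are $(n^2-1)(-1/n)^p c=(-1/n)^p\big((n^2-1)c\big)$, which are exactly the free cumulants of $D_{-1/n}\mathrm{MP}_{(n^2-1)c}$. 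Hence $\big(D_{-1/n}\mathrm{MP}_c\big)^{\boxplus(n^2-1)}=D_{-1/n}\mathrm{MP}_{(n^2-1)c}$, and $\mathrm{MP}_c^{\Delta+}$ is the free convolution of two dilated Mar{\v{c}}enko-Pastur laws whose supports are given explicitly by \eqref{eq:Marchenko-Pastur}.

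Next I would bound $\operatorname{minsupp}\big(\mathrm{MP}_c^{\Delta+}\big)$ from below using the elementary inclusion $\operatorname{supp}(\rho\boxplus\tau)\subseteq\operatorname{supp}\rho+\operatorname{supp}\tau$ (realizing $\rho,\tau$ by free self-adjoint $a,b$, the bounds $a\geq\alpha_1$, $b\geq\alpha_2$ give $a+b\geq\alpha_1+\alpha_2$), which is the same device used in the proof of Theorem~\ref{thm:bounds-mu-Gamma}. Since $c\geq1$, the law $\mathrm{MP}_c$ has no atom at $0$ and support $[(\sqrt c-1)^2,(\sqrt c+1)^2]$, so $\operatorname{minsupp}\big(D_{\frac{n^2+n-1}{n}}\mathrm{MP}_c\big)=\tfrac{n^2+n-1}{n}(\sqrt c-1)^2$; and since the supremum of the support of $\mathrm{MP}_{c'}$ equals $(\sqrt{c'}+1)^2$ for every $c'>0$ (the atom at $0$, present when $c'<1$, lowers the infimum but never raises the supremum), $\operatorname{minsupp}\big(D_{-1/n}\mathrm{MP}_{(n^2-1)c}\big)=-\tfrac1n\big(\sqrt{(n^2-1)c}+1\big)^2$. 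Adding the two lower bounds,
$$n\cdot\operatorname{minsupp}\big(\mathrm{MP}_c^{\Delta+}\big)\ \geq\ (n^2+n-1)(\sqrt c-1)^2-\big(\sqrt{(n^2-1)c}+1\big)^2.$$

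It then remains to check that the stated hypothesis on $c$ makes the right-hand side positive and to record the asymptotics in \eqref{eq:sufficient-SEP-MP}. Since $\sqrt{n^2+n-1}-\sqrt{n^2-1}>0$, the hypothesis $\sqrt c>\frac{1+\sqrt{n^2+n-1}}{\sqrt{n^2+n-1}-\sqrt{n^2-1}}$ rearranges to $\sqrt{n^2+n-1}\,(\sqrt c-1)>\sqrt{(n^2-1)c}+1$; the right-hand side is positive, hence so is the left (in particular $\sqrt c>1$), and squaring gives exactly $(n^2+n-1)(\sqrt c-1)^2>\big(\sqrt{(n^2-1)c}+1\big)^2$, that is, $\operatorname{minsupp}\big(\mathrm{MP}_c^{\Delta+}\big)>0$. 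Theorem~\ref{thm:sufficient-conditions-mu} then yields that $X_d$ is almost surely asymptotically $n\otimes d$ separable and that $\mathbb P(X_d\in\mathrm{SEP}_{n,d})\to1$; the equivalence $\sim 2n$ follows by rationalizing $\sqrt{n^2+n-1}-\sqrt{n^2-1}=\frac{n}{\sqrt{n^2+n-1}+\sqrt{n^2-1}}\sim\tfrac12$ and noting $1+\sqrt{n^2+n-1}\sim n$. I do not expect a genuine obstacle here; the only points needing a little care are the (harmless) atom of $\mathrm{MP}_{(n^2-1)c}$ at $0$ when $(n^2-1)c<1$, which does not affect the supremum of its support, and carrying out the squaring step only where both sides are nonnegative.
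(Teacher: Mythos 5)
Your proposal is correct and follows essentially the same route as the paper: it identifies $\big(D_{-1/n}\mathrm{MP}_c\big)^{\boxplus(n^2-1)}=D_{-1/n}\mathrm{MP}_{(n^2-1)c}$ via the free cumulants and then requires $\tfrac{n^2+n-1}{n}\operatorname{minsupp}\mathrm{MP}_c>\tfrac1n\operatorname{maxsupp}\mathrm{MP}_{(n^2-1)c}$, which rearranges to \eqref{eq:sufficient-SEP-MP}. The only (immaterial) difference is that the paper also checks the $\Delta_-$ criterion and observes it gives a weaker bound, whereas you rely solely on the $\Delta_+$ condition, which suffices since Theorem \ref{thm:sufficient-conditions-mu} only needs one of the two hypotheses.
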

\begin{proof}
Let us consider first the criterion corresponding to the map $\Delta_+$. We are interested in the support of the probability measure 
$$\mathrm{MP}_c^{\Delta+} := D_{\frac{n^2+n-1}{n}}\mathrm{MP}_c \boxplus (D_{-1/n}\mathrm{MP}_c^{\boxplus n^2-1}) = D_{\frac{n^2+n-1}{n}}\mathrm{MP}_c \boxplus D_{-1/n} \mathrm{MP}_{c(n^2-1)}.$$
A sufficient condition for the support of $\mathrm{MP}_c^{\Delta+}$ to be positive is that 
$$\frac{n^2+n-1}{n} \operatorname{minsupp} \mathrm{MP}_c > \frac 1 n \operatorname{maxsupp} \mathrm{MP}_{c(n^2-1)},$$
which is equivalent to \eqref{eq:sufficient-SEP-MP}. A similar analysis for the map $\Delta_-$ yields the sufficient condition 
$$\sqrt c > \frac{1 + \sqrt{n^2-2}}{\sqrt{n^2-1} - \sqrt{n^2-2}},$$
which can be seen to be weaker than \eqref{eq:sufficient-SEP-MP} for $n \geq 2$, proving the claim.
\end{proof}

Following the proof of Lemma \ref{lem:mu-equals-mu-gamma}, one can easily show that the only probability distributions which are invariant under the $\Delta_\pm$ modifications are Dirac masses; we leave the proof as an exercise for the reader. 

Let us now look for sufficient conditions on the probability measure $\mu$ which would ensure that the hypotheses of Theorem \ref{thm:sufficient-conditions-mu} are satisfied. Our approach here is identical to the one used in Theorem \ref{thm:bounds-mu-Gamma}.

\begin{proposition}\label{prop:bounds-mu-Delta}
Let $\mu$ be a probability measure having mean $m$ and variance $\sigma^2$, whose support is contained in the compact interval $[A,B]$. Then, provided that 
$$(n^2+n-1)A > B + m(n^2-2) + 2\sigma \sqrt{n^2-2},$$
we have $\operatorname{supp}(\mu^{\Delta+}) \subset (0, \infty).$ Similarly,
$$A>(n^2-2)(B-m) + 2\sigma\sqrt{n^2-2} \implies \operatorname{supp}(\mu^{\Delta-}) \subset (0, \infty).$$
In particular, if any of the conditions above hold, then, almost surely as $d \to \infty$, $X_d \in \mathrm{SEP}_{n,d}$; in particular,
$$\lim_{d \to \infty} \mathbb P(X_d \in \mathrm{SEP}_{n,d}) = 1.$$
\end{proposition}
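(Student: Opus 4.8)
The plan is to mimic exactly the strategy of Theorem~\ref{thm:bounds-mu-Gamma}: rewrite each modified measure $\mu^{\Delta\pm}$ as an iterated free additive convolution power of a single ``building-block'' measure, obtain crude bounds on the support of that building block from Proposition~\ref{prop:support-mu-T}, and then apply Proposition~\ref{prop:support-mu-T} once more to the (large) convolution power, reading off the resulting lower bound for $\operatorname{minsupp}$. For the $\Delta+$ case, note from \eqref{eq:def-mu-delta+} that
$$\mu^{\Delta+} = D_{\frac{n^2+n-1}{n}}\mu \boxplus (D_{-1/n}\mu)^{\boxplus(n^2-1)}.$$
Since $D_c$ scales the $p$-th free cumulant by $c^p$, the first term has $\kappa_1 = \frac{n^2+n-1}{n}\,m$, $\kappa_2 = \frac{(n^2+n-1)^2}{n^2}\sigma^2$, and its support sits inside $\big[\frac{n^2+n-1}{n}A,\ \frac{n^2+n-1}{n}B\big]$ (using $A\ge 0$). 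For the second term, first observe $(D_{-1/n}\mu)$ has $\kappa_1 = -m/n$, $\kappa_2 = \sigma^2/n^2$, and support in $\big[-\tfrac B n, -\tfrac A n\big]$; then apply Proposition~\ref{prop:support-mu-T} with $T = n^2-1$ to get a lower bound $-\tfrac B n - \tfrac m n (n^2-2) - 2\tfrac{\sigma}{n}\sqrt{n^2-2}$ for $\operatorname{minsupp}$ of $(D_{-1/n}\mu)^{\boxplus(n^2-1)}$.

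Since the free convolution of two measures has support bounded below by the sum of the two individual lower bounds, we obtain
$$\operatorname{minsupp}(\mu^{\Delta+}) \ \geq\ \frac{n^2+n-1}{n}A \ -\ \frac{B}{n} \ -\ \frac{m(n^2-2)}{n}\ -\ \frac{2\sigma}{n}\sqrt{n^2-2}.$$
Multiplying through by $n>0$, this quantity is strictly positive precisely under the hypothesis $(n^2+n-1)A > B + m(n^2-2) + 2\sigma\sqrt{n^2-2}$, giving the first claim. The $\Delta-$ case is handled the same way starting from \eqref{eq:def-mu-delta-}: the term $D_{2-n^2}\mu = D_{-(n^2-2)}\mu$ (note $n\ge 2$ so $2-n^2\le 0$) has support in $\big[-(n^2-2)B,\ -(n^2-2)A\big]$ with $\kappa_1 = -(n^2-2)m$; the term $\mu^{\boxplus(n^2-1)}$ has, by Proposition~\ref{prop:support-mu-T} with $T=n^2-1$, support bounded below by $A + m(n^2-2) - 2\sigma\sqrt{n^2-2}$. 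Summing the lower bounds yields $\operatorname{minsupp}(\mu^{\Delta-}) \geq A - (n^2-2)(B-m) - 2\sigma\sqrt{n^2-2}$, which is positive exactly under the stated condition. The final ``in particular'' follows immediately by invoking Theorem~\ref{thm:sufficient-conditions-mu}.

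There is essentially no hard obstacle here beyond bookkeeping: the two genuine points of care are (i) tracking the dilation factors correctly through the free cumulants — $D_c$ acts as $\kappa_p \mapsto c^p\kappa_p$, so in particular $D_{-1/n}$ multiplies $\kappa_2$ by $1/n^2$ and $\kappa_1$ by $-1/n$, which is why the $\sqrt{T-1}$ term in Proposition~\ref{prop:support-mu-T} picks up a factor $\sigma/n$ rather than $\sigma$; and (ii) using $A\ge 0$ to replace the true support of $D_{\frac{n^2+n-1}{n}}\mu$ or $\mu^{\boxplus(n^2-1)}$ by the crude interval bound without losing the inequality direction. One should also double-check that Proposition~\ref{prop:support-mu-T} is being applied to a measure whose parameters ($m$, $\sigma^2$, $[A,B]$) have been correctly rescaled, since that proposition is stated for a generic measure and we are feeding it the dilated/convolved building blocks. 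Unlike Theorem~\ref{thm:bounds-mu-Gamma}, no auxiliary ``$\varepsilon$'' trick balancing two convolution powers is needed, because in \eqref{eq:def-mu-delta+}–\eqref{eq:def-mu-delta-} one of the two summands already has multiplicity $1$; this makes the present proof slightly shorter than that of Theorem~\ref{thm:bounds-mu-Gamma}.
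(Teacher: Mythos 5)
Your proof is correct and follows essentially the same route as the paper: decompose $\mu^{\Delta\pm}$ into its two free summands, bound the lower edge of the dilated factor trivially and that of the $(n^2-1)$-fold convolution power via Proposition~\ref{prop:support-mu-T}, then add the two lower bounds. The only difference is cosmetic — you write out the $\Delta-$ case explicitly where the paper dismisses it as ``similar'' — and your bookkeeping of the dilation's effect on mean, variance and support is accurate.
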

\begin{proof}
We just prove the implication for $\mu^{\Delta+}$, the other one being similar. Let us define 
$$A_1 := \operatorname{minsupp}\left(D_{\frac{n^2+n-1}{n}}\mu\right) \quad \text{ and } \quad A_2 := \operatorname{minsupp}\left(D_{-1/n}\mu^{\boxplus(n^2-1)}\right).$$
We obviously have $A_1 \geq A (n^2+n-1)/n$; to lower bound $A_2$, we use Proposition \ref{prop:support-mu-T} for $D_{-1/n}\mu$, to obtain
$$A_2 \geq \frac{-B}{n} + \frac{-m}{n}(n^2-2) - 2 \frac{\sigma}{n}\sqrt{n^2-2}.$$
The conclusion follows now from the previous two inequalities, ensuring that $A_1+A_2 > 0$.
\end{proof}

\section{Necessary conditions - the correlated witness} \label{sec:necessary}

We shift focus in this section and study \emph{necessary} conditions for separability, or, equivalently, sufficient conditions for entanglement. Many such criteria (usually called entanglement criteria), exist in the literature, and we shall start by quickly reviewing them. Next, we discuss a criterion coming from a random entanglement witness, arguing that is a very useful one. 

Given the use of entanglement for quantum tasks, and the computational hardness of deciding separability, there exist a plethora of criteria permitting to certify the entanglement of a given (mixed) quantum state. Most of these criteria stem from the following very simple observation: Let $f:\mathcal M_n(\mathbb C) \to \mathcal M_n(\mathbb C)$ be a positive map (that is, a map which preserves the positive semidefinite cone). Then, for any matrix $X \in \mathrm{SEP}_{n,d}$, we have 
$$(f \otimes \operatorname{id})(X) \geq 0.$$
Hence, if the output $(f \otimes \operatorname{id})(Y)$ is not positive semidefinite, then the input matrix $Y$ is entangled (assuming that $Y$ was positive semidefinite to begin with). Every choice of a positive map $f$ yields an entanglement criterion; some of the most studied such maps are the \emph{transposition map} (giving the PPT criterion discussed at the end of Section \ref{sec:block-modified-strong}) and the reduction map 
$$f(X) = (\operatorname{Tr} X)I_n - X,$$
giving the \emph{reduction criterion} \cite{cerf1999reduction,horodecki1999reduction}, which can be shown to be weaker (i.e.~it detects fewer entangled states) than the PPT criterion, but it is interesting nonetheless for its relation to the distillability problem. There are some other entanglement criteria which do not fall in this framework, the most notable being the realignment criterion \cite{chen2002matrix,rudolph2003cross}; we shall not discuss these criteria here, see \cite{horodecki2009quantum} for a review and \cite{aubrun2012realigning,jivulescu2015thresholds} for results about random quantum states. 

Since the set of separable states is a closed convex cone, by the Hahn-Banach theorem one can find, given any entangled matrix $X$, one can find a hyperplane separating $X$ from $\mathrm{SEP}_{n,d}$. In other words, there exists a \emph{block-positive} operator $W \in \mathcal M_{nd}^{sa}(C)$, called an \emph{entanglement witness}, such that $\operatorname{Tr}(WX) <0$. We recall that an operator $W$ is called block-positive iff 
\begin{equation}\label{eq:def-BP}
\forall x \in \mathbb C^n, \forall y \in \mathbb C^d, \quad \langle x \otimes y, W x \otimes y \rangle \geq 0.
\end{equation}

In this section, we shall make a very particular choice for the operator $W$: we shall set, for a constant $\beta \in \mathbb R$, 
\begin{equation}\label{eq:def-WX}
W:=\beta I_{nd} - X,
\end{equation}
using the following intuition: \emph{what better witness for a quantum state's entanglement than the state itself?} Pursuing this idea for unitarily invariant quantum states, we obtain the entanglement criterion from Theorem \ref{thm:witness-X}. Before we state and prove that theorem, we need some preliminary results, which we find interesting for their own sake. 

First, let us recall the following definition from \cite{johnston2010family}, see also \cite{johnston2011family}:

\begin{definition}
The \emph{$S(k)$ norm} of an operator $X \in \mathcal M_{nd}(\mathbb C)$ is defined to be
\begin{equation}\label{eq:def-Sk}
\|X\|_{S(k)} := \sup \{ |\langle v, X w\rangle| \, : \, \operatorname{SR}(v), \operatorname{SR}(w) \leq k\},
\end{equation}
where the \emph{Schmidt rank} of a vector $v \in \mathbb C^n \otimes \mathbb C^d$ is its tensor rank
$$\operatorname{SR}(v) := \min \{ k \geq 0 \, : \, v = \sum_{i=1}^k x_i \otimes y_i \}.$$
If the operator $X$ is normal, than one can restrict the maximization in \eqref{eq:def-Sk} to $w=v$.
\end{definition}

Obviously, the operator $W$ from \eqref{eq:def-WX} is block-positive as soon as $\beta \geq \|X\|_{S(1)}$ (moreover, if $X$ were positive, then the two statements would be equivalent, see \cite[Corollary 4.9]{johnston2010family} for the general case of $k$-block-positivity). So, in order to certify the block-positivity of bipartite operators having a strong asymptotic limit, we need the following result. 

\begin{proposition}\label{prop:Sk-norm-ui}
Let $X_d \in \mathcal M_{dn}^{+}(\mathbb C)$ a sequence of unitarily invariant random matrices as in Definition \ref{def:strong-convergence} converging strongly to a compactly supported probability measure $\mu \in \mathcal P(\mathbb R)$; here, $n$ and $\mu$ are fixed.  Then, almost surely, 
$$\lim_{d \to \infty} \|X_d\|_{S(k)} = \frac k n \|\mu^{\boxplus n/k}\|_\infty,$$
where we write $\|\nu\|_\infty := \|A\|_{L^\infty}$ for some random variable $A$ having distribution $\nu$.
\end{proposition}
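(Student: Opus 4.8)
The plan is to reduce the computation of $\|X_d\|_{S(k)}$ to the operator norm of a block-modified random matrix, and then to invoke Theorem \ref{thm:strong-cv-block-modified}. First I would recall the variational characterization: since $X_d$ is self-adjoint (being positive), we have $\|X_d\|_{S(k)} = \sup\{\langle v, X_d v\rangle : \operatorname{SR}(v) \leq k\}$, where the supremum runs over unit vectors $v$ with Schmidt rank at most $k$ across the first tensor factor. The key structural observation is that a vector $v \in \mathbb C^n \otimes \mathbb C^d$ has Schmidt rank $\leq k$ precisely when, viewing $v$ as an $n \times d$ matrix $V$, we have $\operatorname{rank}(V) \leq k$; and one can realize such $V$ as $V = (W \otimes I_d) \tilde v$ where $W$ is a fixed $n \times (n k)$ coisometry-type embedding... but more cleanly, I would follow the standard trick relating the $S(k)$ norm to the $S(1)$ norm on an ampliated space: $\|X_d\|_{S(k)}$ on $\mathcal M_n \otimes \mathcal M_d$ equals the $S(1)$ norm (ordinary operator norm restricted to product vectors on the first cut) of a compression of $I_k \otimes X_d$ living on $\mathbb C^{nk} \otimes \mathbb C^d$, realized via isometries $\mathbb C^n \hookrightarrow \mathbb C^k \otimes \mathbb C^n$. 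This is where $n/k$ enters: heuristically one is spreading a rank-$k$ vector as a product vector in a space where the first factor has dimension $n/k$ relative to $n$.

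Second, I would make this precise by the averaging/concentration argument: the relevant quantity is $\sup_{x} \|(\langle x| \otimes I_d) \hat X_d (|x\rangle \otimes I_d)\|$ over unit $x$ in the enlarged first factor, for an appropriate conjugation $\hat X_d$ of $X_d$ by a random isometry (which, by unitary invariance of $X_d$, may be taken Haar random or even fixed after rotation). The map $X \mapsto$ (this sup over $x$) is, up to the factor $k/n$, exactly the block-modification by the map $\varphi(Y) = (\operatorname{Tr} Y) I_n / (n/k) \cdot (\text{something})$ — concretely, the depolarizing-type channel whose Choi matrix realizes $\mu \mapsto \mu^{\boxplus n/k}$ after dilation $D_{1/n}$. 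I would identify $\varphi$ as the completely depolarizing channel on $\mathcal M_n(\mathbb C)$ rescaled so that its Choi matrix has a single nonzero eigenvalue $n/k$ (with the remaining eigenvalue $0$ of multiplicity $n^2-1$, which contributes trivially); Theorem \ref{thm:strong-cv-block-modified} then gives that $(\varphi \otimes \operatorname{id}_d)(X_d)$ converges strongly to $(D_{1/n}\mu)^{\boxplus \,?}$ — I would verify the arithmetic so that the exponent is $n/k$ and the dilation combine to $\tfrac{k}{n}\mu^{\boxplus n/k}$ in the sense of the supremum of the support. Strong convergence then upgrades convergence in distribution to convergence of operator norms, yielding $\lim_d \|X_d\|_{S(k)} = \frac{k}{n}\|\mu^{\boxplus n/k}\|_\infty$ almost surely.

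The main obstacle I anticipate is making rigorous the passage from "sup over product vectors $x$ of a norm of a compressed matrix" to "operator norm of a single block-modified matrix." Unlike the partial transpose or depolarizing cases treated earlier, the $S(k)$ norm is intrinsically a supremum over a continuum of one-dimensional compressions, not the norm of one fixed linear image of $X_d$; one needs an argument (an $\varepsilon$-net over the sphere in $\mathbb C^n$ combined with the strong-convergence operator-norm control, or a direct identification of the sup with $\|(\varphi \otimes \operatorname{id}_d)(X_d)\|$ for the right $\varphi$ using the fact that the extremal $x$ can be taken in a fixed finite-dimensional space independent of $d$) to collapse it to the framework of Theorem \ref{thm:strong-cv-block-modified}. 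I would handle the upper bound via the net argument (finitely many compressions, each controlled by strong convergence, with the net error vanishing because $\|X_d\|$ stays bounded) and the lower bound by exhibiting near-optimal rank-$k$ test vectors built from the limiting eigenvector structure; the two bounds should match at $\frac{k}{n}\|\mu^{\boxplus n/k}\|_\infty$. The remaining check that $n/k \geq 1$ is needed for $\mu^{\boxplus n/k}$ to be defined — valid precisely when $k \leq n$, which is the only interesting range for the $S(k)$ norm on the $n \otimes d$ system — should be noted.
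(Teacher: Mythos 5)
Your central reduction does not go through as stated. The $S(k)$ norm is a supremum over a continuum of rank-$k$ compressions, and you correctly flag this as the main obstacle; but your primary proposed fix --- identifying the supremum with $\|(\varphi\otimes\operatorname{id}_d)(X_d)\|$ for a single depolarizing-type map $\varphi$ --- cannot work. Inside the framework of Theorem \ref{thm:strong-cv-block-modified}, the limit law of a block modification is $\boxplus_{i}(D_{\lambda_i/n}\mu)^{\boxplus r_i}$ with \emph{integer} multiplicities $r_i$; a Choi matrix with one nonzero eigenvalue and the rest zero produces just $D_{\lambda_1/n}\mu$ (the zero eigenvalues contribute Dirac masses at $0$), i.e.\ a dilation of $\mu$, not a free convolution power, and no choice of $\varphi$ yields $\mu^{\boxplus n/k}$ --- in particular not for non-integer $n/k$. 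The measure $\mu^{\boxplus n/k}$ enters through a different mechanism entirely: free \emph{compression}. If $p$ is a projection of normalized trace $k/n$ free from $x\sim\mu$, the distribution of $pxp$ in the compressed algebra is $D_{k/n}\bigl(\mu^{\boxplus n/k}\bigr)$ (Nica--Speicher); this is the fact your sketch never isolates, and it is where both the exponent $n/k$ and the prefactor $k/n$ actually come from.

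Your fallback $\varepsilon$-net argument is the right general idea and is in principle salvageable: writing $\sup_{\operatorname{SR}(v)\le k}\langle v,X_d v\rangle=\sup_P\lambda_{\max}\bigl((P\otimes I_d)X_d(P\otimes I_d)|_{\operatorname{ran}}\bigr)$ over rank-$k$ projections $P$ on $\mathbb C^n$ (a fixed compact set independent of $d$, so a finite net plus the a.s.\ boundedness of $\|X_d\|$ controls the error), one still needs, for each fixed $P$, the \emph{strong} convergence of the compressed matrix to the free compression above; supplying that is essentially the content of \cite[Theorem 4.2]{collins2015random}. The paper does not rebuild any of this: it observes that $m_d=\sup\{\lambda:\,X_d-\lambda$ is $k$-positive$\}$ and $M_d=\inf\{\lambda:\,\lambda-X_d$ is $k$-positive$\}$, quotes that result on asymptotic $k$-block-positivity of strongly convergent unitarily invariant ensembles, and uses $(T_{-\lambda}\mu)^{\boxplus n/k}=T_{-\lambda n/k}(\mu^{\boxplus n/k})$ to conclude $m=\frac kn\operatorname{minsupp}(\mu^{\boxplus n/k})$ and $M=\frac kn\operatorname{maxsupp}(\mu^{\boxplus n/k})$. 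As written, your proposal is missing the ingredient that actually produces $\mu^{\boxplus n/k}$.
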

\begin{proof}

Since we are interested in the limit $d \to \infty$ and $n$ is fixed, we assume wlog that $n \leq d$. Moreover, since the matrices $X_d$ are self-adjoint, we have 
$$\|X_d\|_{S(k)} = \max(|m_d|,|M_d|),$$
where
\begin{align*}
m_d&:= \inf_{v\in \mathbb C^{nd}, \|v\| = 1, \operatorname{SR}(v) \leq k} \langle v, X_d v \rangle\\
M_d&:= \sup_{v\in \mathbb C^{nd}, \|v\| = 1, \operatorname{SR}(v) \leq k} \langle v, X_d v \rangle.
\end{align*}
We relate now the above numbers to $k$-positivity: 
\begin{align*}
m_d&= \sup\{ \lambda \in \mathbb R \, : \, X_d - \lambda \text{ is $k$-positive}\}\\
M_d&=  \inf\{ \lambda \in \mathbb R \, : \, \lambda - X_d \text{ is $k$-positive}\}.
\end{align*}
The asymptotic $k$-positivity of strongly convergent sequences of random matrices has been studied in \cite[Theorem 4.2]{collins2015random}, where it has been shown that, almost surely, the sequence $X_d$ is asymptotically $k$-block-positive if $\operatorname{supp}(\mu^{\boxplus n/k}) \subset (0, \infty)$, and, reciprocally, it is not $k$-block-positive if $\operatorname{supp}(\mu^{\boxplus n/k}) \cap (-\infty, 0) \neq \emptyset$; note that the case where the left endpoint of the support of $\mu^{\boxplus n/k}$ is zero is excluded, since in this case one needs extra information about the fluctuations of the smallest eigenvalue. Applying this result to our setting, we obtain, say for $m_d$: almost surely,
$$m:=\lim_{d \to \infty} m_d = \sup\{ \lambda \in \mathbb R \, : \, \operatorname{supp}\left(\mu_\lambda^{\boxplus n/k}\right) \subset (0,\infty)\},$$
where $\mu_\lambda = T_{-\lambda}\mu$, with $T_\cdot$ denoting the translation operator. We have obviously
$$(T_{-\lambda} \mu)^{\boxplus n/k} = T_{-\lambda n/k}(\mu^{\boxplus n/k}),$$
and thus
\begin{align*}
m&= \sup\{ \lambda \in \mathbb R \, : \, \operatorname{supp}\left(\mu_\lambda^{\boxplus n/k}\right) \subset (0,\infty)\}\\
&=  \sup\{ \lambda \in \mathbb R \, : \, \operatorname{supp}\left(T_{-\lambda n/k} \left(\mu^{\boxplus n/k}\right)\right) \subset (0,\infty)\}\\
&=  \sup\{ \lambda \in \mathbb R \, : \, \operatorname{supp}\left(\mu^{\boxplus n/k}\right) \subset (\lambda n/k,\infty)\}\\
&= \frac k n \operatorname{minsupp}(\mu^{\boxplus n/k}).
\end{align*}
Similarly, we get
$$M:=\lim_{d \to \infty} M_d = \frac k n \operatorname{maxsupp}(\mu^{\boxplus n/k}),$$
finishing the proof.
\end{proof}

Before moving on, let us discuss the value of the $S(k)$ norms for random projections. This case is important for quantum information theory, as it was argued in \cite[Section 7] {johnston2011family}; see also \cite[Theorem 4.15]{johnston2010family} for general norm bonds for projections. We consider here a sequence of Haar-distributed random projection operators $P_d \in \mathcal M_{nd}^{sa}(\mathbb C)$ of ranks $r_d \sim \rho nd$ for some fixed parameter $\rho \in (0,1)$. Using \cite[Proposition 2.9]{fukuda2015additivity}, we obtain the following asymptotic behavior. 
\begin{corollary} 
For a sequence $(P_d)_d$ of random projections as above, and for any $1 \leq k \leq n$, we have the following almost sure limit:
$$\lim_{d \to \infty} \|P_d \|_{S(k)} = \begin{cases} 
1,&\qquad \text{ if } \rho > 1 - \frac k n\\
\rho + \frac k n - 2 \rho \frac k n + 2 \sqrt{\frac k n \left(1 - \frac k n\right) \rho (1-\rho)},&\qquad \text{ if } \rho \leq 1 - \frac k n.\end{cases}$$
\end{corollary}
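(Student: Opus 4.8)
The plan is to apply Proposition \ref{prop:Sk-norm-ui} directly, since the only remaining task is to compute $\|\mu^{\boxplus n/k}\|_\infty$ where $\mu$ is the limiting spectral distribution of the random projections $P_d$. First I would identify this limiting measure: a Haar-random projection of rank $r_d \sim \rho n d$ inside $\mathcal M_{nd}(\mathbb C)$ has empirical spectral distribution converging to $\mu = (1-\rho)\delta_0 + \rho\,\delta_1$, the Bernoulli distribution with parameter $\rho$. Moreover, the sequence $P_d$ converges \emph{strongly} (not merely in distribution) to a projection of trace $\rho$ in a $C^*$-probability space — this is a standard consequence of the strong convergence machinery (e.g.\ via \cite{collins2014strong}, since a Haar projection is $U D U^*$ for $D$ a deterministic diagonal projection, and deterministic matrices with converging spectra converge strongly). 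Hence Proposition \ref{prop:Sk-norm-ui} applies with this $\mu$, and I need $\operatorname{maxsupp}\big(\mu^{\boxplus n/k}\big)$.

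Next I would invoke \cite[Proposition 2.9]{fukuda2015additivity}, cited in the statement, which computes the free convolution power $T \mapsto \big((1-\rho)\delta_0 + \rho\delta_1\big)^{\boxplus T}$ (this is the free analogue of the binomial / free Poisson interpolation; the measure $(1-\rho)\delta_0+\rho\delta_1$ is, up to dilation, related to the free compression of a projection and its $\boxplus$-powers have an explicit edge given by a square-root law). Setting $T = n/k \geq 1$, that result yields the top edge of the support. Plugging $\operatorname{maxsupp}(\mu^{\boxplus n/k})$ into the formula $\|P_d\|_{S(k)} \to (k/n)\operatorname{maxsupp}(\mu^{\boxplus n/k})$ from Proposition \ref{prop:Sk-norm-ui}, and simplifying with $t := k/n$, should produce exactly
$$\rho + t - 2\rho t + 2\sqrt{t(1-t)\rho(1-\rho)}$$
in the generic regime. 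The dichotomy on $\rho$ versus $1 - k/n$ arises because, when $\rho$ is large enough (namely $\rho > 1-t$, equivalently $T(1-\rho) < 1$ fails in the right direction), the free convolution power $\mu^{\boxplus T}$ still has an atom at $1$ — or rather its support reaches all the way up to $1$ and the square-root edge formula gets truncated — forcing $\operatorname{maxsupp} = n/k$ and hence $\|P_d\|_{S(k)} \to 1$; I would extract this threshold directly from the same \cite[Proposition 2.9]{fukuda2015additivity} statement, checking when the upper edge of $\mu^{\boxplus n/k}$ equals $n/k$ (its maximum possible value, since $\operatorname{maxsupp}\mu = 1$ scales to $n/k$ under the relevant normalization).

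The main obstacle is purely bookkeeping: matching the normalization conventions between Proposition \ref{prop:Sk-norm-ui} (which works with $\mu^{\boxplus n/k}$ and then rescales by $k/n$) and the convention in which \cite[Proposition 2.9]{fukuda2015additivity} states its edge formula, and making sure the atom-versus-square-root-edge threshold lands precisely at $\rho = 1 - k/n$. There is no genuine difficulty beyond this algebra — once the limiting measure is correctly identified as $(1-\rho)\delta_0 + \rho\delta_1$ and its free convolution powers are read off, both branches of the claimed formula follow by substitution. I would present the argument as: (i) identify $\mu$ and note strong convergence; (ii) quote Proposition \ref{prop:Sk-norm-ui}; (iii) quote \cite[Proposition 2.9]{fukuda2015additivity} for the edge of $\mu^{\boxplus n/k}$; (iv) simplify, splitting into the two cases according to whether the upper edge has saturated at $n/k$.
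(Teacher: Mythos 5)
Your proposal is correct and follows exactly the route the paper intends: identify the strong limit of $(P_d)_d$ as the Bernoulli measure $(1-\rho)\delta_0+\rho\delta_1$, apply Proposition \ref{prop:Sk-norm-ui}, and read off $\operatorname{maxsupp}\bigl(\mu^{\boxplus n/k}\bigr)$ from \cite[Proposition 2.9]{fukuda2015additivity}, with the case $\rho>1-k/n$ corresponding to the atom of $\mu^{\boxplus n/k}$ at the top of its support. The paper gives no more detail than this, so nothing is missing.
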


We have now all the ingredients to state and prove the main result of this section.

\begin{theorem}\label{thm:witness-X}
Let $X_d \in \mathcal M_{dn}^{+}(\mathbb C)$ a sequence of unitarily invariant random matrices as in Definition \ref{def:strong-convergence} converging to a compactly supported probability measure $\mu \in \mathcal P([0, \infty))$; here, $n$ and $\mu$ are fixed. If 
\begin{equation}\label{eq:witness-X}
\frac 1 n \operatorname{maxsupp}\left( \mu^{\boxplus n}\right) < \frac{m_2(\mu)}{m_1(\mu)}
\end{equation}
 then, almost surely as $d \to \infty$, $X_d \notin \mathrm{SEP}_{n,d}$. In particular,
$$\lim_{d \to \infty} \mathbb P(X_d \in \mathrm{SEP}_{n,d}) = 0.$$
\end{theorem}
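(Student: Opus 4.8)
The plan is to use the witness $W := \beta I_{nd} - X_d$ from \eqref{eq:def-WX} with a judiciously chosen constant $\beta$, and to show that for large $d$ it simultaneously (i) detects $X_d$ as entangled, i.e.~$\operatorname{Tr}(W X_d) < 0$, and (ii) is a legitimate entanglement witness, i.e.~block-positive. For (ii), recall from the discussion after Proposition~\ref{prop:Sk-norm-ui} that $W = \beta I_{nd} - X_d$ is block-positive precisely when $\beta \geq \|X_d\|_{S(1)}$, and by Proposition~\ref{prop:Sk-norm-ui} with $k=1$ we have, almost surely,
$$\lim_{d \to \infty} \|X_d\|_{S(1)} = \frac 1 n \|\mu^{\boxplus n}\|_\infty = \frac 1 n \operatorname{maxsupp}(\mu^{\boxplus n}),$$
using that $\mu$ (hence $\mu^{\boxplus n}$) is supported on $[0,\infty)$ so the $L^\infty$ norm equals the right endpoint of the support. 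Thus any fixed $\beta$ strictly larger than $\frac 1 n \operatorname{maxsupp}(\mu^{\boxplus n})$ makes $W$ block-positive almost surely for $d$ large.

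For (i), I would compute $\operatorname{Tr}(W X_d) = \beta \operatorname{Tr}(X_d) - \operatorname{Tr}(X_d^2)$. Since $X_d$ converges strongly (in particular in moments) to $\mu$, and $X_d$ acts on $\mathbb C^{nd}$, we have $\frac{1}{nd}\operatorname{Tr}(X_d) \to m_1(\mu)$ and $\frac{1}{nd}\operatorname{Tr}(X_d^2) \to m_2(\mu)$ almost surely. Hence
$$\frac{1}{nd}\operatorname{Tr}(W X_d) \longrightarrow \beta\, m_1(\mu) - m_2(\mu)$$
almost surely, which is strictly negative as soon as $\beta < m_2(\mu)/m_1(\mu)$. (Here $m_1(\mu) > 0$ since $\mu \neq \delta_0$ — if $\mu = \delta_0$ the statement is vacuous or trivial, and I would dispose of that edge case separately.)

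The two requirements on $\beta$ are compatible exactly when
$$\frac 1 n \operatorname{maxsupp}(\mu^{\boxplus n}) < \frac{m_2(\mu)}{m_1(\mu)},$$
which is hypothesis \eqref{eq:witness-X}: under this assumption we may pick $\beta$ strictly between the two quantities, and then almost surely, for all sufficiently large $d$, $W = \beta I_{nd} - X_d$ is a block-positive operator with $\operatorname{Tr}(W X_d) < 0$, so $X_d \notin \mathrm{SEP}_{n,d}$. The convergence of $\mathbb P(X_d \in \mathrm{SEP}_{n,d})$ to $0$ follows since almost sure convergence implies convergence in probability. The only genuinely delicate point is part (ii): invoking Proposition~\ref{prop:Sk-norm-ui} (which in turn rests on the strong-convergence machinery, \cite[Theorem 4.2]{collins2015random} and \cite{collins2014strong}) to pass from convergence-in-distribution information about the spectrum to an honest almost-sure bound on the largest Schmidt-rank-one value $\|X_d\|_{S(1)}$ — the moment computation in (i) is routine by comparison.
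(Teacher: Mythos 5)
Your proposal is correct and follows essentially the same route as the paper: the witness $W_d = \beta I_{nd} - X_d$, block-positivity via Proposition~\ref{prop:Sk-norm-ui} with $k=1$, the moment computation $\frac{1}{nd}\operatorname{Tr}(W_d X_d) \to \beta m_1(\mu) - m_2(\mu)$, and choosing $\beta$ strictly between $\frac1n \operatorname{maxsupp}(\mu^{\boxplus n})$ and $m_2(\mu)/m_1(\mu)$. Your added remarks on the $\mu=\delta_0$ edge case and on $\|\mu^{\boxplus n}\|_\infty$ coinciding with the right endpoint of the support for measures on $[0,\infty)$ are sound refinements of the paper's argument.
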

\begin{proof}
To show that the matrices $W_d$ from \eqref{eq:def-WX} are indeed entanglement witnesses for $X_d$ (almost surely as $d \to \infty$), we need to show, for an appropriate choice of the constant $\beta$, two things:
\begin{enumerate}
\item The maps $W_d$ are asymptotically block-positive.
\item $\lim_{d \to \infty} \langle W_d, X_d \rangle < 0$.
\end{enumerate}
We use Proposition \ref{prop:Sk-norm-ui} with $k=1$ for the first item: $W_d$ are asymptotically entanglement witnesses provided that 
$$\beta > \lim_{d \to \infty} \|X_d\|_{S(1)} = \|\mu^{\boxplus n}\|_\infty = \frac 1 n \operatorname{maxsupp}(\mu^{\boxplus n}).$$ 
The computation of the limit appearing in the second item above is straightforward: almost surely, we have
$$\lim_{d \to \infty} \frac{1}{nd} \langle \beta I_{nd} - X_d, X_d \rangle  = \beta m_1(\mu) - m_2(\mu).$$
We are done: choose any $\beta$ satisfying
$$\frac 1 n \operatorname{maxsupp}(\mu^{\boxplus n}) < \beta < \frac{m_2(\mu)}{m_1(\mu)}.$$
\end{proof}

As in the previous section, we consider next some applications of the result above, which we state as corollaries. We start with the case of shifted GUEs, see also \cite[Theorem 5.4]{collins2015random}.

\begin{corollary}
Let $X_d \in \mathcal M_{nd}^{sa}(\mathbb C)$ a sequence of (normalized) GUE matrices, and set $Y_d := mI_{nd} + \sigma X_d$, for some constants $m, \sigma \geq 0$.  If 
$$\frac 1 2 < \frac \sigma m < \frac 2 {\sqrt n},$$
then $Y_d$ is asymptotically positive semidefinite, PPT, and entangled. 
\end{corollary}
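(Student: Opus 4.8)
The plan is to identify the strong limit of $Y_d$ once and for all, and then feed it into the PPT criterion of Section~\ref{sec:partial-transposition} and into Theorem~\ref{thm:witness-X}. Since $X_d$ is a normalized GUE, Theorem~\ref{thm:Wigner} together with strong convergence (\cite[Theorem~1.4]{collins2014strong}, or the original Haagerup--Thorbj{\o}rnsen result) gives $X_d \to \mathrm{SC}_{0,1}$ strongly; composing with the degree-one polynomial $t \mapsto m + \sigma t$ shows (directly from Definition~\ref{def:strong-convergence}) that $Y_d = mI_{nd} + \sigma X_d$ is unitarily invariant and converges strongly to $\mu := \mathrm{SC}_{m,\sigma}$. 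I record the three facts I will use: $\operatorname{supp}\mu = [m-2\sigma,\,m+2\sigma]$, $m_1(\mu) = m$, and $m_2(\mu) = \sigma^2 + m^2$.

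\emph{Positive semidefiniteness and PPT.} By strong convergence the smallest eigenvalue of $Y_d$ converges almost surely to $\operatorname{minsupp}\mu = m - 2\sigma$, which is strictly positive exactly when $\sigma/m < 1/2$; under that condition $Y_d$ is almost surely eventually positive definite, and in particular $\mu \in \mathcal P([0,\infty))$, as needed to apply Theorem~\ref{thm:witness-X} below. For the PPT property I use that the semicircular family is a fixed point of the $\Gamma$-operation, $\mathrm{SC}_{m,\sigma}^\Gamma = \mathrm{SC}_{m,\sigma}$; this is the remark following the PPT proposition, and is checked on free cumulants straight from \eqref{eq:def-mu-gamma} --- one finds $\kappa_p(\mu^\Gamma) = \tfrac{\kappa_p(\mu)}{2\,n^{p-1}}\bigl[(n+1) + (-1)^p(n-1)\bigr]$, which equals $\kappa_p(\mu)$ for $p = 1, 2$, while $\mu = \mathrm{SC}_{m,\sigma}$ has no cumulants of higher order. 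Hence $\operatorname{minsupp}(\mu^\Gamma) = m - 2\sigma > 0$ as well, and the PPT proposition of Section~\ref{sec:partial-transposition} gives $Y_d \in \mathrm{PPT}_{n,d}$ almost surely as $d \to \infty$.

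\emph{Entanglement.} Apply Theorem~\ref{thm:witness-X} to $\mu$. The free cumulants of $\mu^{\boxplus n}$ are $n$ times those of $\mu$, so $\kappa_1(\mu^{\boxplus n}) = nm$, $\kappa_2(\mu^{\boxplus n}) = n\sigma^2$ and all higher ones vanish, i.e.\ $\mu^{\boxplus n} = \mathrm{SC}_{nm,\,\sqrt n\,\sigma}$ with $\operatorname{maxsupp}(\mu^{\boxplus n}) = nm + 2\sqrt n\,\sigma$; hence $\tfrac1n\operatorname{maxsupp}(\mu^{\boxplus n}) = m + 2\sigma/\sqrt n$. On the other side $m_2(\mu)/m_1(\mu) = (\sigma^2 + m^2)/m = m + \sigma^2/m$, so condition~\eqref{eq:witness-X} reads $m + 2\sigma/\sqrt n < m + \sigma^2/m$, that is $2/\sqrt n < \sigma/m$ after dividing by $\sigma > 0$. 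Under this bound Theorem~\ref{thm:witness-X} gives $Y_d \notin \mathrm{SEP}_{n,d}$ almost surely for large $d$. Putting the two constraints together, $2/\sqrt n < \sigma/m < 1/2$ (the order in which the two bounds of the statement are to be read), yields the three asserted properties simultaneously; this range is non-empty precisely for $n \geq 17$.

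I do not foresee a genuine obstacle: the corollary is an assembly of three results already in hand, and the only points wanting a line of justification are (i) that affinely shifting and rescaling a GUE preserves unitary invariance and strong convergence, so that the edge-of-spectrum statements and the $S(1)$-norm computation underlying Theorem~\ref{thm:witness-X} are legitimate, and (ii) that no control of the fluctuations of the extreme eigenvalues is needed here, since both $m - 2\sigma > 0$ and \eqref{eq:witness-X} are \emph{strict} inequalities.
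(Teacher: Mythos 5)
Your proof is correct and follows essentially the same route as the paper: strong convergence of $Y_d$ to $\mathrm{SC}_{m,\sigma}$, the support $[m-2\sigma,m+2\sigma]$ for positivity, the $\Gamma$-invariance of semicircular laws for PPT, and Theorem \ref{thm:witness-X} via $\mathrm{SC}_{m,\sigma}^{\boxplus n}=\mathrm{SC}_{nm,\sigma\sqrt n}$ for entanglement. You are also right that the inequalities in the statement (and in the paper's own proof) are written in the wrong order: positivity requires $\sigma/m<1/2$ while entanglement requires $\sigma/m>2/\sqrt n$, so the intended hypothesis is $2/\sqrt n<\sigma/m<1/2$, which is non-vacuous precisely for $n\geq 17$.
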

\begin{proof}
Note that the sequence $Y_d$ from the statement converges strongly to the semicircular probability measure $\operatorname{SC}_{m,\sigma}$, which is supported on the interval $[m-2\sigma, m+2\sigma]$. Hence, if $\sigma / m > 1/2$, the matrices $Y_d$ are asymptotically positive semidefinite, and also PPT (since the GUE distribution is Wigner). For the second inequality, use $\operatorname{SC}_{m,\sigma}^{\boxplus n} = \operatorname{SC}_{mn,\sigma\sqrt n}$.
\end{proof}

\begin{corollary}
Let $X_d \in \mathcal M_{dn}^{+}(\mathbb C)$ a sequence of unitarily invariant random matrices as in Definition \ref{def:strong-convergence} converging strongly to the Mar{\v{c}}enko-Pastur probability distribution of parameter $c>0$. If 
$$c < \frac{(n-1)^2}{4n}$$
then, almost surely as $d \to \infty$, $X_d \notin \mathrm{SEP}_{n,d}$; in particular,
$$\lim_{d \to \infty} \mathbb P(X_d \in \mathrm{SEP}_{n,d}) = 0.$$
\end{corollary}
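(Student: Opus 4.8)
The plan is to apply Theorem \ref{thm:witness-X} directly to the Mar\v{c}enko-Pastur distribution $\mathrm{MP}_c$, so the only work is to translate condition \eqref{eq:witness-X} into an explicit inequality on the parameter $c$. First I would record the moments of $\mathrm{MP}_c$: since all free cumulants of $\mathrm{MP}_c$ equal $c$, we have $m_1(\mathrm{MP}_c)=\kappa_1=c$ and, from the moment-cumulant formula over $\mathrm{NC}_2=\{\,\text{the two partitions of }[2]\,\}$, $m_2(\mathrm{MP}_c)=\kappa_1^2+\kappa_2=c^2+c$. Hence the right-hand side of \eqref{eq:witness-X} is $m_2/m_1 = (c^2+c)/c = c+1$.

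Next I would compute the left-hand side of \eqref{eq:witness-X}. The key observation is that the free convolution semigroup acts very simply on Mar\v{c}enko-Pastur laws: since $\kappa_p(\mathrm{MP}_c^{\boxplus n})=n\kappa_p(\mathrm{MP}_c)=nc$ for all $p$, we get $\mathrm{MP}_c^{\boxplus n}=\mathrm{MP}_{nc}$. Therefore $\operatorname{maxsupp}(\mathrm{MP}_c^{\boxplus n}) = \operatorname{maxsupp}(\mathrm{MP}_{nc}) = (1+\sqrt{nc})^2$, using the density formula \eqref{eq:Marchenko-Pastur} with $b=(1+\sqrt{nc})^2$. So the left-hand side of \eqref{eq:witness-X} is $\tfrac1n(1+\sqrt{nc})^2$.

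The condition of Theorem \ref{thm:witness-X} then reads
$$\frac{(1+\sqrt{nc})^2}{n} < c+1,$$
i.e. $1 + 2\sqrt{nc} + nc < nc + n$, which simplifies to $2\sqrt{nc} < n-1$. This requires $n\geq 2$ (so the right side is positive) and, upon squaring, $4nc < (n-1)^2$, that is $c < (n-1)^2/(4n)$, exactly the stated hypothesis. Since this is precisely condition \eqref{eq:witness-X} for $\mu=\mathrm{MP}_c$, Theorem \ref{thm:witness-X} yields that $X_d\notin\mathrm{SEP}_{n,d}$ almost surely as $d\to\infty$, and in particular $\lim_{d\to\infty}\mathbb{P}(X_d\in\mathrm{SEP}_{n,d})=0$.

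There is no real obstacle here; the proof is a routine specialization. The only points that require a moment of care are the identity $\mathrm{MP}_c^{\boxplus n}=\mathrm{MP}_{nc}$ (immediate from the free-cumulant characterization of Mar\v{c}enko-Pastur laws recalled in Section \ref{sec:free-prob}) and the second moment computation $m_2=c^2+c$; both are standard. One should also note in passing that for the hypothesis to be non-vacuous one needs $n\geq 2$, which is the relevant regime anyway.
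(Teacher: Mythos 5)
Your proof is correct and follows exactly the paper's route: specialize Theorem \ref{thm:witness-X} to $\mu=\mathrm{MP}_c$, using $\mathrm{MP}_c^{\boxplus n}=\mathrm{MP}_{nc}$ with $\operatorname{maxsupp}=(1+\sqrt{nc})^2$ and $m_2/m_1=c+1$, and simplify $(1+\sqrt{nc})^2/n<c+1$ to $c<(n-1)^2/(4n)$. The paper states the same inequality and leaves the algebra implicit, so your write-up is just a slightly more detailed version of the same argument.
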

\begin{proof}
We use the criterion in Theorem \ref{thm:witness-X} to obtain the following condition for entanglement
$$\frac{(\sqrt{cn}+1)^2}{n} = \frac 1 n \operatorname{maxsupp}\left( \mathrm{MP}_c^{\boxplus n}\right) < \frac{m_2(\mathrm{MP}_c)}{m_1(\mathrm{MP}_c)} = \frac{c^2+c}{c}.$$
\end{proof}

Putting together the bounds above with the ones from \cite[Theorem 6.2]{banica2013asymptotic} (see also \eqref{eq:threshold-PPT-MP}, we obtain the following corollary. 

\begin{corollary}
For any $n \geq 18$ and $c$ such that 
$$c \in \left( 2 + 2\sqrt{1 - \frac{1}{n^2}}, \frac{(n-1)^2}{4n} \right),$$
a sequence of unitarily invariant random matrices  $X_d \in \mathcal M_{dn}^{+}(\mathbb C)$ converging strongly to the Mar{\v{c}}enko-Pastur probability distribution of parameter $c>0$ is, almost surely in the limit $d \to \infty$, PPT and entangled. 
\end{corollary}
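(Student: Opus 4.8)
The plan is to intersect the two ranges already established and verify the resulting interval is non-empty precisely for $n \geq 18$; there is essentially no new mathematics, only a bookkeeping step and an elementary numerical check.

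First I would recall that the lower constraint $c > 2 + 2\sqrt{1 - 1/n^2}$ is exactly condition \eqref{eq:threshold-PPT-MP}: by \cite[Theorem 6.2]{banica2013asymptotic} (equivalently, by applying the partial-transposition proposition of Section \ref{sec:partial-transposition} to $\mu = \mathrm{MP}_c$, noting that $c > 2 > 1$ so that $\mathrm{MP}_c$ has no atom at $0$), this forces $\operatorname{minsupp}\mathrm{MP}_c^\Gamma > 0$, hence $X_d \in \mathrm{PPT}_{n,d}$ almost surely as $d \to \infty$. Next, the upper constraint $c < (n-1)^2/(4n)$ is exactly the hypothesis of the immediately preceding corollary, which was obtained from Theorem \ref{thm:witness-X} via $\operatorname{maxsupp}(\mathrm{MP}_c^{\boxplus n}) = (1+\sqrt{cn})^2$ and $m_2(\mathrm{MP}_c)/m_1(\mathrm{MP}_c) = 1 + c$, and which gives $X_d \notin \mathrm{SEP}_{n,d}$ almost surely. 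Since for any fixed $c$ in the open interval both conclusions hold on almost sure events, their conjunction also holds almost surely, so $X_d$ is asymptotically PPT and entangled.

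The only real content is checking that the interval $\bigl(2 + 2\sqrt{1-1/n^2},\, (n-1)^2/(4n)\bigr)$ is non-empty iff $n \geq 18$. The left endpoint satisfies $2 + 2\sqrt{1-1/n^2} < 4$ for every $n$ and increases to $4$; the right endpoint equals $n/4 - 1/2 + 1/(4n)$, which increases without bound. One verifies the inequality fails (narrowly) at $n = 17$, where the right endpoint is $256/68 \approx 3.765$ while the left endpoint is $2 + 2\sqrt{288/289} \approx 3.9965$, and holds at $n = 18$, where the right endpoint is $289/72 \approx 4.014$ while the left endpoint is $2 + 2\sqrt{323/324} \approx 3.9969$; monotonicity of both sides then covers all $n \geq 18$. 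I expect this elementary comparison to be the main (indeed, the only) obstacle: it is routine but must be carried out with some care, since the two bounds nearly coincide right at the threshold $n = 18$.
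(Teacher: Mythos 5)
Your proposal is correct and follows exactly the route the paper intends (the paper states this corollary without proof, simply as the intersection of the PPT threshold \eqref{eq:threshold-PPT-MP} with the entanglement condition $c<(n-1)^2/(4n)$ from the preceding corollary). Your numerical verification that the interval is empty at $n=17$ and non-empty for all $n\geq 18$, via monotonicity of both endpoints and the bound $2+2\sqrt{1-1/n^2}<4$, is the only substantive content and it checks out.
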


\begin{proposition}\label{prop:bounds-X}
Let $\mu$ be a probability measure having mean $m$ and variance $\sigma^2$, whose support is contained in the compact interval $[A,B]$. Assume that 
$$\frac B m < 1 + n \frac{\sigma^2}{m^2} - 2 \frac{\sigma}{m}\sqrt{n-1}.$$
Then, for any sequence of unitarily invariant random matrices  $X_d \in \mathcal M_{dn}^{+}(\mathbb C)$ converging strongly to $\mu$, we have that almost surely as $d \to \infty$, $X_d \notin \mathrm{SEP}_{n,d}$; in particular,
$$\lim_{d \to \infty} \mathbb P(X_d \in \mathrm{SEP}_{n,d}) = 0.$$
\end{proposition}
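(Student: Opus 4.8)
The plan is to apply the entanglement criterion of Theorem \ref{thm:witness-X} directly, bounding its left-hand side by means of the support estimate of Proposition \ref{prop:support-mu-T}. First note that the hypothesis forces $m>0$: since $m$ is the mean of $\mu$ we have $m\in[A,B]$, and $\mu$ is supported in $[0,\infty)$ (being the strong limit of the positive semidefinite matrices $X_d$), so $m\geq 0$; the value $m=0$ is excluded because then $\mu=\delta_0$ and the displayed inequality is vacuous. Hence all the ratios below make sense.

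The two moments entering \eqref{eq:witness-X} are immediate from the data: $m_1(\mu)=m$ and $m_2(\mu)=m^2+\sigma^2$, so the right-hand side of \eqref{eq:witness-X} equals $m+\sigma^2/m$. For the left-hand side, apply Proposition \ref{prop:support-mu-T} with $T=n$ to obtain
$$\operatorname{supp}(\mu^{\boxplus n})\subseteq\bigl[A+m(n-1)-2\sigma\sqrt{n-1},\ B+m(n-1)+2\sigma\sqrt{n-1}\bigr],$$
whence
$$\frac1n\operatorname{maxsupp}(\mu^{\boxplus n})\leq\frac{B+m(n-1)+2\sigma\sqrt{n-1}}{n}.$$

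It then suffices to check that this upper bound is strictly smaller than $m+\sigma^2/m$. Multiplying the desired inequality by $n$ and rearranging gives the equivalent form $B+2\sigma\sqrt{n-1}<m+n\sigma^2/m$, and dividing by $m>0$ yields precisely
$$\frac Bm<1+n\frac{\sigma^2}{m^2}-2\frac{\sigma}{m}\sqrt{n-1},$$
which is the hypothesis. Therefore \eqref{eq:witness-X} holds and Theorem \ref{thm:witness-X} applies, giving that almost surely $X_d\notin\mathrm{SEP}_{n,d}$ for $d$ large, hence $\lim_{d\to\infty}\mathbb P(X_d\in\mathrm{SEP}_{n,d})=0$.

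There is no substantial obstacle here: the argument is a short chain of inequalities once the moment identities and the $T=n$ case of Proposition \ref{prop:support-mu-T} are in hand. The only points deserving a line of care are verifying $m>0$ so the criterion is meaningful, and observing that Proposition \ref{prop:support-mu-T} gives only an inclusion — but an upper bound for $\operatorname{maxsupp}(\mu^{\boxplus n})$ is exactly what is needed, since the strict inequality \eqref{eq:witness-X} for the true maxsupp follows a fortiori from the same inequality for the larger bound.
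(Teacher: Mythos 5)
Your proof is correct and follows exactly the paper's (one-line) argument: apply Theorem \ref{thm:witness-X} and bound $\operatorname{maxsupp}(\mu^{\boxplus n})$ from above via Proposition \ref{prop:support-mu-T} with $T=n$; the algebra reducing the resulting sufficient condition to the stated hypothesis checks out. The extra care about $m>0$ is a reasonable (if minor) addition not present in the paper.
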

\begin{proof}
The result follows from Theorem \ref{thm:witness-X}, using the upper bound from Proposition \ref{prop:support-mu-T}.
\end{proof}

\section{PPT matrices with large Schmidt number}\label{sec:SN-PPT}

The Schmidt number of a positive semidefinite matrix $X \in \mathcal M_{d_1}(\mathbb C) \otimes \mathcal M_{d_2}(\mathbb C)$ is a discrete measure of entanglement. It is defined, for rank-one matrices as 
$$\operatorname{SN}(xx^*) = \operatorname{rk} [\operatorname{id}_{d_1} \otimes \operatorname{Tr}_{d_2}](xx^*)$$
and extended by the convex roof construction to arbitrary matrices
$$\operatorname{SN}(X) = \min\{ r \, : \, X = \sum_{i=1}^m x_i x_i^* \, \text{ with } \, \operatorname{SN}(x_ix_i^*) \leq r\}.$$ 
Obviously, $\operatorname{SN}(X) =1$ iff $X \in \mathrm{SEP}_{d_1,d_2}$, and $\operatorname{SN}(X) \leq \min(d_1,d_2)$ for all positive semidefinite $X$. It is an interesting question whether imposing that the partial transposition of $X$ is positive semidefinite has any implications on the range of values the Schmidt number can take. Very recently, explicit examples of PPT matrices $X \in \mathcal M_d(\mathbb C) \otimes \mathcal M_d(\mathbb C)$ with $\operatorname{SN}(X) \geq 	\lceil (d-1)/4 \rceil$ have been constructed \cite[Corollary III.3]{huber2018high}. In the same paper, the authors show that, for large $d$, most quantum states acting on $\mathbb C^d \otimes \mathbb C^d$ have Schmidt number greater than $cd$, for some universal constant $c$. 

In the unbalanced case, we show that the linear scaling $\operatorname{SN}(X) \geq \min(d_1,d_2)/16$ can be achieved by using GUE random matrices. The example below complements the construction of PPT entangled states from \cite[Section 5]{collins2015random}, by providing a lower bound for the Schmidt number.

\begin{theorem}
For any fixed integer $n \geq 2$, consider the sequence of self-adjoint matrices $X_d:=aI_{nd} -  G_d \in \mathcal M_n(\mathbb C) \otimes \mathcal M_d(\mathbb C)$, where $G_d$ is a $\mathrm{GUE}_{nd}$ random matrix. There exists a constant $a >0$ (made explicit in the proof) such that the following conditions hold almost surely, as $d \to \infty$:
\begin{itemize}
\item $X_d$ is PPT: $X_d, X_d^\Gamma \geq 0$
\item $\operatorname{SN}(X_d) > \lfloor (n-1)/16 \rfloor$.
\end{itemize}
\end{theorem}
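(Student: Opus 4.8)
The plan is to split the statement into its two claims and handle each by translating it into a condition on the limiting spectral measure, which here is a shifted semicircular law. Set $G_d$ to be $\mathrm{GUE}_{nd}$, so $G_d$ converges strongly to $\mathrm{SC}_{0,1}$, and therefore $X_d = aI_{nd} - G_d$ converges strongly to $\mu := \mathrm{SC}_{a,1}$, supported on $[a-2, a+2]$ with mean $m = a$ and variance $\sigma^2 = 1$. For the PPT claim I would invoke the fact (used in the excerpt after the first Proposition of Section~\ref{sec:partial-transposition}) that the GUE is simultaneously Wigner and unitarily invariant, hence $X_d \stackrel{\mathcal D}{=} X_d^\Gamma$ and $\mathrm{SC}_{a,1}^\Gamma = \mathrm{SC}_{a,1}$; so $X_d$ and $X_d^\Gamma$ are both asymptotically positive definite as soon as $a > 2$. (Alternatively one can use Theorem~\ref{thm:bounds-mu-Gamma} with $m = a$, $\sigma = 1$, $B - A = 4$, giving the weaker but still explicit condition $n(a-2) > 6$.)

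For the Schmidt number claim, the key observation is the standard fact that if $X$ has Schmidt number at most $k$ then $(\Theta_k \otimes \operatorname{id})(X) \geq 0$ for every $k$-positive map $\Theta_k$; contrapositively, exhibiting a $k$-positive map $\Theta$ with $(\Theta \otimes \operatorname{id})(X_d) \not\geq 0$ asymptotically forces $\operatorname{SN}(X_d) > k$. Equivalently — and this is what I would actually use — one can phrase it via block-positivity: $X_d$ fails to have Schmidt number $\leq k$ whenever $X_d - \beta I$ fails to be $k$-block-positive for some $\beta$ that is still a valid "witness value", exactly as in Proposition~\ref{prop:Sk-norm-ui} and Theorem~\ref{thm:witness-X}. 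Concretely, I would reuse the correlated-witness idea: $W_d := \beta I_{nd} - X_d$ detects that $\operatorname{SN}(X_d) > k$ provided (i) $W_d$ is asymptotically $k$-block-positive, which by Proposition~\ref{prop:Sk-norm-ui} holds for $\beta > \frac{k}{n}\|\mu^{\boxplus n/k}\|_\infty = \frac{k}{n}\operatorname{maxsupp}(\mu^{\boxplus n/k})$, and (ii) $\lim_{d}\frac{1}{nd}\langle W_d, X_d\rangle = \beta m_1(\mu) - m_2(\mu) < 0$, i.e.\ $\beta < m_2(\mu)/m_1(\mu)$. So the requirement becomes
$$\frac{k}{n}\operatorname{maxsupp}\!\left(\mu^{\boxplus n/k}\right) < \frac{m_2(\mu)}{m_1(\mu)}.$$

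Now I specialize to $\mu = \mathrm{SC}_{a,1}$. Since the semicircular family is stable under free convolution powers, $\mu^{\boxplus n/k} = \mathrm{SC}_{an/k,\sqrt{n/k}}$, whose support has right endpoint $an/k + 2\sqrt{n/k}$; and $m_1(\mu) = a$, $m_2(\mu) = a^2 + 1$. The condition above reads $a + \frac{2}{n}\sqrt{kn/1}\cdot\frac{k}{n}\cdot\ldots$ — more cleanly, $\frac{k}{n}(an/k + 2\sqrt{n/k}) = a + 2\sqrt{k/n} < a + 1/a$, i.e.\ $2\sqrt{k/n} < 1/a$, i.e.\ $k < \frac{n}{4a^2}$. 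Combining with the PPT constraint $a > 2$: picking $a$ just above $2$ makes the entanglement witness detect Schmidt number up to roughly $n/16$. Choosing $a$ slightly larger than $2$ and $k = \lfloor (n-1)/16 \rfloor$ one checks $k < n/(4a^2)$ holds (with room to spare for $a$ close enough to $2$, using that $(n-1)/16 < n/16$ strictly), so $\operatorname{SN}(X_d) > \lfloor (n-1)/16\rfloor$ almost surely. One small point requiring care: Proposition~\ref{prop:Sk-norm-ui} is stated for $1 \leq k \leq n$ and excludes the boundary case where $\operatorname{minsupp}$ equals zero, but here all quantities are bounded away from $0$ by the strict inequalities, so the a.s.\ convergence of the extremal eigenvalues goes through, and the strict-inequality choice of $\beta$ in an open interval gives the desired almost sure conclusion.

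The main obstacle is bookkeeping rather than conceptual: one must make sure the value of $a$ chosen to guarantee PPT ($a > 2$, or the cruder $a > 2 + 6/n$ from Theorem~\ref{thm:bounds-mu-Gamma}) is simultaneously small enough that $\lfloor (n-1)/16\rfloor < n/(4a^2)$; taking $a = 2 + \varepsilon$ with $\varepsilon \to 0$ the right side tends to $n/16 > (n-1)/16 \geq \lfloor (n-1)/16\rfloor$, so such an $a > 2$ exists for every $n \geq 2$, and one records an explicit admissible value (e.g.\ any $a$ with $2 < a < \frac{1}{2}\sqrt{n/\lfloor(n-1)/16\rfloor}$, noting this interval is nonempty). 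The rest is a direct application of Proposition~\ref{prop:Sk-norm-ui}, the moment computation for $\mathrm{SC}_{a,1}$, and the stability of the semicircular family under $\boxplus$.
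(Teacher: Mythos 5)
Your proof is correct and follows essentially the same route as the paper: the paper's correlated $k$-positive map $\Phi_d$, whose adjoint has Choi matrix $C_d = bI_{nd}+G_d$ and which is tested against the maximally entangled vector, is under Choi--Jamio{\l}kowski duality exactly your $k$-block-positive witness $W_d=\beta I_{nd}-X_d=(\beta-a)I_{nd}+G_d$ with $b=\beta-a$, and your two conditions $\beta> a+2\sqrt{k/n}$ and $\beta< a+1/a$ coincide with the paper's $nb/k-2\sqrt{n/k}>0$ and $ab<1$. Both arguments reduce to $k<n/(4a^2)$ with $a$ slightly above $2$, giving $k=\lfloor (n-1)/16\rfloor$.
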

\begin{proof}
The asymptotic distribution of the random matrix $X_d$ is $\mathrm{SC}_{a,1}$, and thus $X_d$ is positive semidefinite as $d \to \infty$ iff 
\begin{equation}\label{eq:SN-PPT-a-sigma}
a > 2.
\end{equation}
Recall from Section \ref{sec:partial-transposition} that the matrices $X_d$ and $X_d^\Gamma$ have the same distribution, so the fact that $X_d^\Gamma$ is also positive semidefinite comes at no cost (this being the reason that shifted GUE random matrices are useful for PPT-related questions). 

Let us now show that, asymptotically, $\operatorname{SN}(X_d) > \lfloor (n-1)/16 \rfloor$. This relation is equivalent to finding a $\lfloor (n-1)/16 \rfloor$-positive map $\Phi_d:\mathcal M_n(\mathbb C) \to \mathcal M_d(\mathbb C)$ such that $[\Phi_d \otimes \operatorname{id}_d](X_d)$ is not positive semidefinite. Let $C_d \in \mathcal M_n(\mathbb C) \otimes \mathcal M_d(\mathbb C)$ denote the Choi matrix of the adjoint map $\Phi_d^*$, and let us choose $C_d = bI_{nd} + G_d$. Importantly, the matrix $G_d$ here is the same as the one appearing in the definition of the matrix $X_d$; hence, the random matrix $X_d$ and the random map $\Phi_d$ are correlated. Note that the distribution of the Choi matrix $C_d$ is $\mathrm{SC}_{b,1}$. By \cite[Theorem 4.2]{collins2015random}, the following holds almost surely as $d \to \infty$: if $\operatorname{supp}(\mathrm{SC}_{b,1}^{\boxplus n/k}) \subset (0,\infty)$, the map $\Phi_d^*$ (and thus $\Phi_d$) is asymptotically $k$-positive. Since $\mathrm{SC}_{b,1}^{\boxplus n/k} = \mathrm{SC}_{nb/k,\sqrt{n/k}}$, this condition is equivalent to 
\begin{equation}\label{eq:SN-PPT-b-k}
\frac{nb}{k} - 2 \sqrt{\frac n k} > 0.
\end{equation}

Let us now find a sufficient condition for $[\Phi_d \otimes \operatorname{id}_d](X_d) \ngeq 0$. Denoting by $\Omega_d$ the maximally entangled state in $\mathbb C^d \otimes \mathbb C^d$ and setting $\omega_d = \Omega_d \Omega_d^*$, we have
\begin{align*}
\langle  \Omega_d, [\Phi_d \otimes \operatorname{id}_d](X_d) \Omega_d \rangle &= \langle [\Phi_d^* \otimes \operatorname{id}_d](\omega_d), X_d \rangle\\
&= \langle C_d, X_d \rangle\\
&= \langle bI_{nd} + G_d, aI_{nd} -  G_d \rangle\\
&= (d+o(d))(ab-1),
\end{align*}
where we have used that the GUE matrix $G_d$ satisfies, almost surely,
$$\lim_{d \to \infty} \frac 1 d \operatorname{Tr} G_d = 0 \quad \text {and} \quad \lim_{d \to \infty} \frac 1 d \operatorname{Tr}(G_d^2) = 1.$$
Hence, if 
\begin{equation}\label{eq:SN-PPT-a-b-sigma}
ab - 1 < 0
\end{equation}
the matrix $[\Phi_d \otimes \operatorname{id}_d](X_d)$ is, asymptotically, not positive semidefinite. 

We claim that if the system of equations \eqref{eq:SN-PPT-a-sigma}, \eqref{eq:SN-PPT-b-k}, \eqref{eq:SN-PPT-a-b-sigma} has a solution in $a,b$, then the matrix $X_d$ is asymptotically PPT and $\operatorname{SN}(X_d) > k$. Indeed, the claim about the Schmidt number follows from the fact that the map $\Phi_d$ is asymptotically $k$-positive and, when applied to the $n$-part of $X_d$, it yields an output which is not positive semidefinite. Simple algebra shows that the system of 3 equations \eqref{eq:SN-PPT-a-sigma}, \eqref{eq:SN-PPT-b-k}, \eqref{eq:SN-PPT-a-b-sigma} has a solution in $a,b$ iff $k < n/16$. Taking $k = \lfloor (n-1)/16 \rfloor$ proves the claim about the Schmidt number and finishes the proof. 
 \end{proof}

\bibliography{../library}{}
\bibliographystyle{alpha}
\end{document}